\newtheorem{theorem}{Theorem}
\newtheorem{lemma}{Lemma}
\theoremstyle{definition}
\begin{document}
%
\title{Storage and Repair Bandwidth Tradeoff for Distributed Storage Systems with Clusters and Separate Nodes}




%
\author{\IEEEauthorblockN{Jingzhao Wang\IEEEauthorrefmark{1},
Tinghan Wang\IEEEauthorrefmark{2}
Yuan Luo\IEEEauthorrefmark{3}}\\
\IEEEauthorblockA{\IEEEauthorrefmark{1}Email: wangzhe.90@sjtu.edu.cn}
\IEEEauthorblockA{\IEEEauthorrefmark{2}Email: wth19941018@sjtu.edu.cn}
\IEEEauthorblockA{\IEEEauthorrefmark{3}Email: yuanluo@sjtu.edu.cn}\\
Department of Computer Science and Engineering\\ Shanghai Jiao Tong University, Shanghai 200240, China}


\maketitle

\begin{abstract}
The optimal tradeoff between node storage and repair bandwidth is an important issue for distributed storage systems (DSSs). As for realistic DSSs with clusters, when repairing a failed node, it is more efficient to download more data from intra-cluster nodes than from cross-cluster nodes. Therefore, it is meaningful to differentiate the repair bandwidth from intra-cluster and cross-cluster. For cluster DSSs the tradeoff has been considered with special repair assumptions where all the alive nodes are utilized to repair a failed node. In this paper, we investigate the optimal tradeoff for cluster DSSs under more general storage/repair parameters. Furthermore, a regenerating code construction strategy achieving the points in the optimal tradeoff curve is proposed for cluster DSSs with specific parameters as a numerical example. Moreover, the influence of separate nodes for the tradeoff is also considered for DSSs with clusters and separated nodes.
\end{abstract}


%
\IEEEpeerreviewmaketitle

\section{Introduction}
As data center storage expands at scale, storage node failures are more prevalent \cite{Jiang2008}, where distributed storage systems (DSSs) with erasure coding are widely utilized to ensure data reliability \cite{Zhang2017Tree, Li2017Beehive,Huang2013}. When a storage node in DSSs has failed, to recover the failed node, a new node will download data from others which are called helper nodes. The amount of data to download is called the repair bandwidth. In \cite{Dimakis2010}, the tradeoff between node storage and bandwidth to repair one node is investigated for homogeneous DSSs \cite{Ernvall2013} where all the nodes (hard disks or other storage devices) have the same parameters (storage per node, repair bandwidth, etc.). Meanwhile, regenerating codes are proposed based on the tradeoff to reduce the repair bandwidth of DSSs.

Contrast to homogeneous DSSs, in heterogeneous DSSs~\cite{Ernvall2013,yu2015tradeoff}, nodes can have different storage and repair bandwidths. In~\cite{Akhlaghi2010}, the communication cost among nodes is taken into consideration, where the storage of each node is equal, but the repair bandwidth varies based on the location of failed nodes. In realistic storage systems, nodes in the same cluster (rack) may be connected to each other with cheaper and faster networks (i.e. local area networks)~\cite{ford2010}, where downloading data from each other may be faster and cheaper. For the sake of reducing communication cost, it is more efficient to download more data from intra-cluster nodes and less from cross-cluster nodes, rather than downloading the same amount of data from others.

This paper investigates a type of heterogenous DSSs with clusters and separate nodes, where the tradeoff between node storage and repair bandwidth is characterized on more flexible parameter settings. The tradeoff for DSSs with only two clusters (racks) is analysed in \cite{Two_rack2013}. In \cite{Prakash2017}, the authors consider cluster DSSs with one relay node in each cluster. The relay node collects data in its cluster and transmits to nodes in other clusters. In \cite{sohn2017capacity}, the properties of DSS with multiple clusters are considered under a specific assumption that the new node will download data from all the other nodes when one node has failed. In current paper, the optimal tradeoff for cluster DSSs is investigated under more general settings that the new node does not need to download data from all the other nodes. The traditional homogeneous DSS and model in \cite{Two_rack2013} and \cite{sohn2017capacity} can be obtained by specializing parameters of our general model.  On the other hand, the tradeoff for DSSs with clusters and separate nodes is also analysed. Moreover, a regenerating code construction strategy is investigated for cluster DSSs with specific parameters.

The rest of this paper is organized as follows. The model of DSS with clusters and separate node (CSN-DSS) is introduced and the problem is formulated in Section~\ref{sec_Pre}. In Section~\ref{sec_Analysis}, the properties of cluster DSSs are analysed in two parts, which are proved in Theorem~\ref{theorem_MC} and Theorem~\ref{theorem_MC2}, respectively. As a general case, the DSS with clusters and one separate node is analysed in Theorem~\ref{theorem_MC3}. Afterward, the tradeoff between node storage and repair bandwidth is characterized. Some numerical results are illustrated in Section~\ref{sec_consruction}, where a regenerating code construction strategy is investigated for cluster DSSs. Finally, Section~\ref{sec_conclusion} presents the conclusion and future work.

\section{Preliminaries and Problem Formulation}\label{sec_Pre}
Subsection~\ref{subsec_CSND} defines main parameters of CSN-DSSs. As an efficient tool to analyse DSSs, the information flow graph is introduced in Subsection~\ref{subsec_IFG}. The problem investigated in this paper is formulated in Subsection~\ref{subsec_PM}.

\subsection{Distributed Storage System with Cluster and Separate Nodes (CSN-DSS)}\label{subsec_CSND}

\begin{figure*}
\begin{minipage}{0.25\textwidth}
    \centering
    \includegraphics[width=0.9\textwidth]{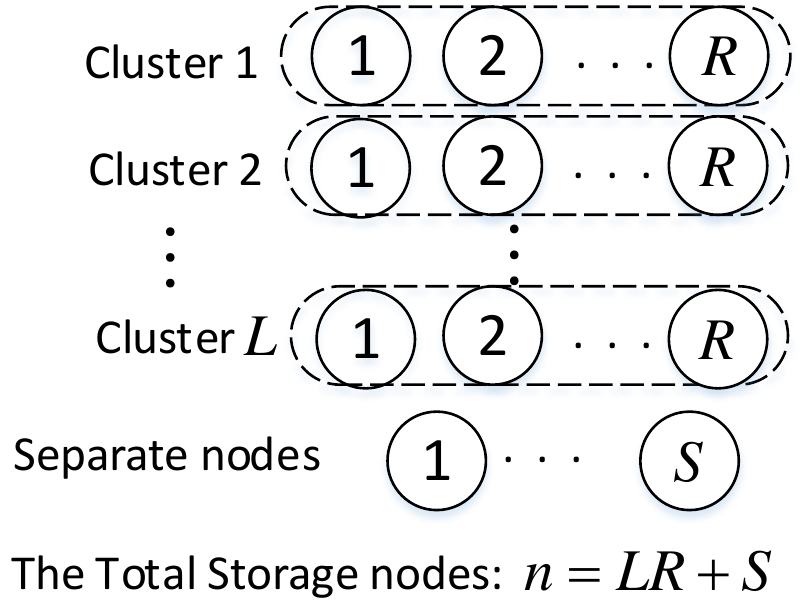}
    \caption{System model for CSN-DSS}\label{fig_CN}
\end{minipage}
\hspace{0.01\textwidth}
\begin{minipage}{0.74\textwidth}
    \subfloat[One node in cluster 1 and one separate node have failed]{
    \begin{minipage}[t]{0.5\textwidth}
    \centering
    \includegraphics[width=\textwidth]{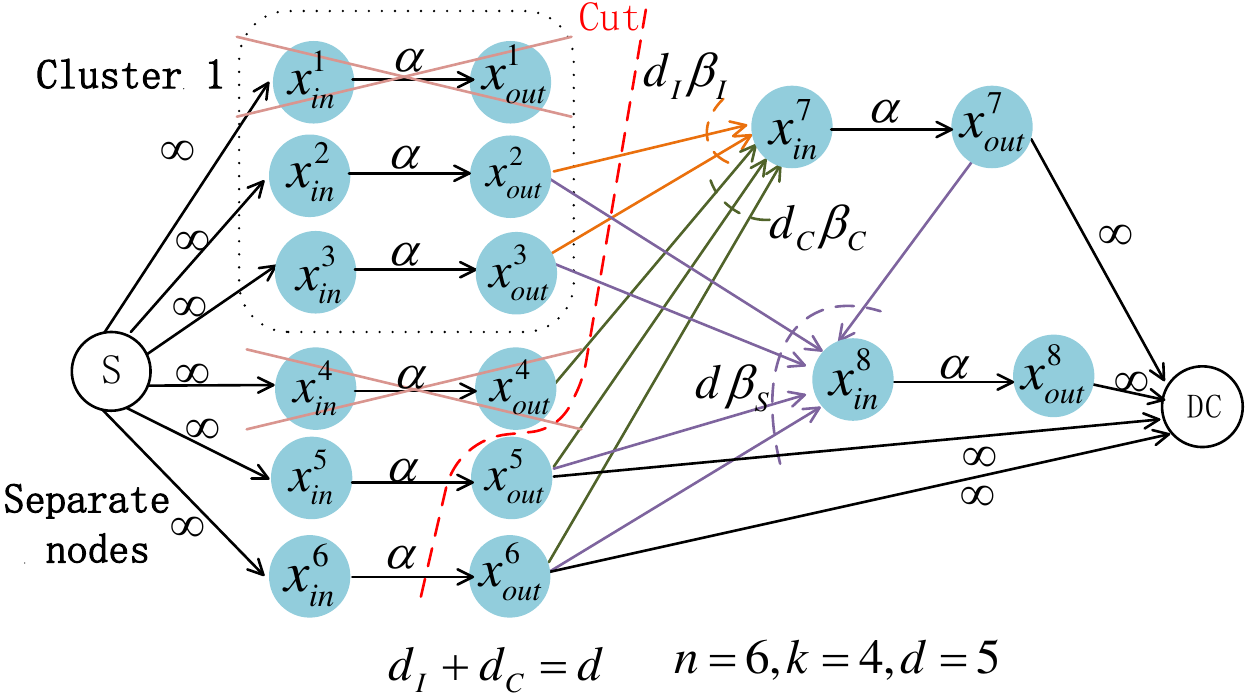}
    \end{minipage}}
    \subfloat[Two nodes in cluster 1 have failed]{
    \begin{minipage}[t]{0.5\textwidth}
    \centering
    \includegraphics[width=\textwidth]{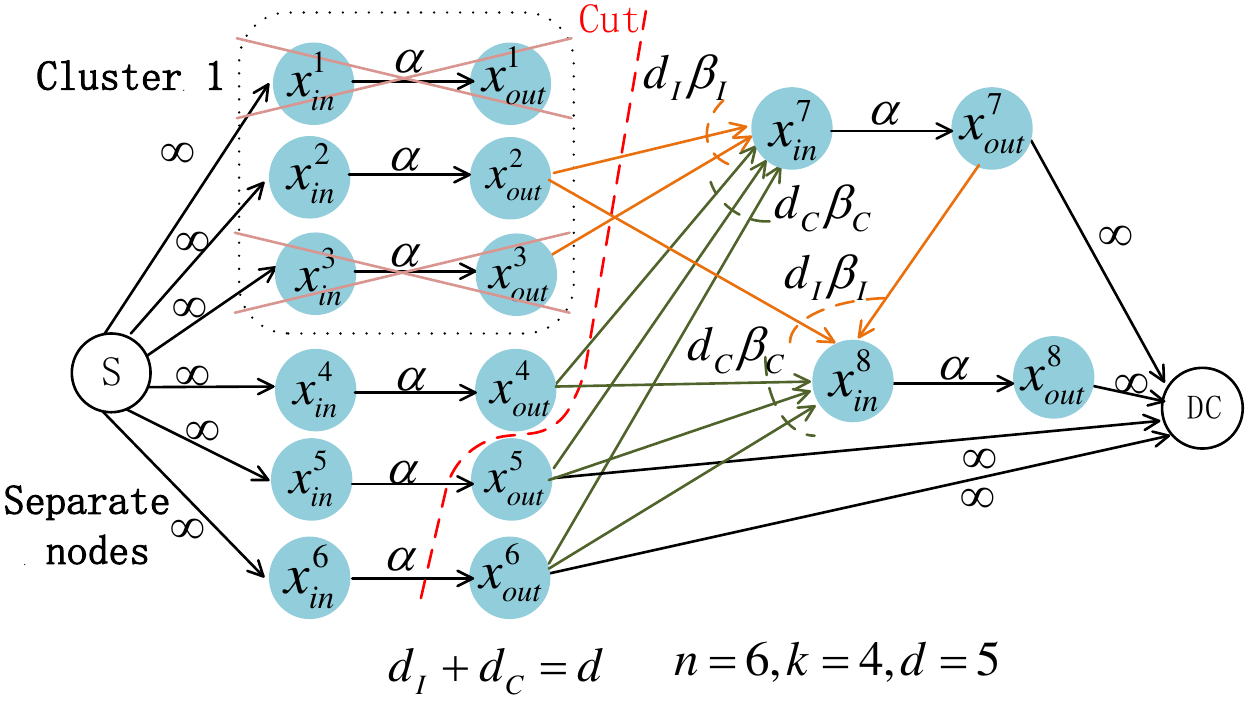}
    \end{minipage}}
    \caption{The IFGs of one cluster and separate nodes distributed storage system}\label{fig_IFGCSN}
    \end{minipage}
\end{figure*}

\textbf{CSN-DSS Model:} The cluster and separate nodes distributed storage system, illustrated in Figure~\ref{fig_CN}, consists of $S$ separate nodes and $L$ clusters each with $R$ nodes. The total number of the storage nodes is $n=LR+S$. A data file of size $\mathcal{M}$ symbols is divided into $k$ fragments, each of size $\alpha=\mathcal{M}/k$ symbols. The $k$ fragments are encoded into $n$ fragments of size $\alpha$ and stored at $n$ nodes. Any $k$ encoded fragments out of $n$ suffice to recover the original data file, which is called the $(n,k)$ MDS\footnote{An $(n,k)$ maximum distance separate (MDS) code encodes $k$ information symbols to $n$ symbols such that any $k$ symbols of $n$ suffice to recover the original information symbols} property.

When repairing a failed node, a newcomer locating in the same cluster will download data from other alive nodes and repair the failed one. If the failed node is a separate node, the newcomer is still a separate node.

If \textbf{a node in cluster} has failed, the newcomer downloads $\beta_I$ symbols each from $d_I$ intra-cluster nodes and $\beta_C$ symbols each from $d_C$ cross-cluster nodes (including nodes from other clusters and separate nodes). Let
\begin{equation*}\label{equ_d}
  d\triangleq d_I+d_C
\end{equation*}
and $d\geq k$ based on the $(n,k)$ MDS property \cite{Survey2011}.

In realistic distributed storage systems, the storage servers may connect to each other with local area networks or external networks. The communication between servers in the same local area network is cheaper than servers in different local area networks and connected with external networks. The servers connected in the same local area network can be seen as being in the same cluster. The separate nodes connect to cluster nodes by external networks. In order to reduce the bandwidth cost, it is better to download more data from nodes in the same cluster and less data from outer nodes, namely, $\beta_I\geq \beta_C$. On the other hand, the intra-cluster nodes are used preferentially in general case, when repairing failed nodes. Therefore, it is reasonable to assume that all the intra-cluster alive nodes are used for repair, namely $d_I=R-1$ in current paper.

If \textbf{a separate node} has failed, the newcomer downloads $\beta_S$ symbols each from $d$ other nodes including nodes in clusters and separate nodes, which means that the newcomer needs only $d$ helper nodes to repair a failed node, no matter where the failed node is located.

In a CSN-DSS, we dub $(n,k,L,R,S)$ the \textbf{node parameters} and $(\alpha, d_I, \beta_I, d_C, \beta_C, \beta_S)$ the \textbf{storage/repair parameters} for simplicity. The intra-cluster and cross-cluster bandwidth of repairing a cluster node is defined as
\begin{small}$$\gamma_I\triangleq d_I\beta_I \text{ \ and\ }\gamma_C \triangleq d_C\beta_C,$$\end{small} respectively.
The bandwidth of repairing a separate node is $\gamma_S\triangleq d\beta_S.$ When $\beta_I=\beta_C=\beta_S$, the traditional homogeneous DSS in \cite{Dimakis2010} is obtained.

\subsection{Information Flow Graph (IFG)}\label{subsec_IFG}

To analyse the performance of distributed storage systems, the information flow graph is proposed in~\cite{Dimakis2010}, which consists of three kinds of nodes: a single data source $S$, storage nodes $x_{in}^i$, $x_{out}^i$, and a data collector $DC$ as shown in Figure~\ref{fig_IFGCSN} (a). A physical storage node (hard disk or other storage device) is represented by a storage input node $x_{in}^i$ and an output node $x_{out}^i$, where pre-computing is permitted when transmitting data. $x_{in}^i$ and $x_{out}^i$ are connected by a directed edge with capacity identical to the storage size $\alpha$ of the node. Throughout this paper, $x^i$ is used to present $x_{in}^i$ and $x_{out}^i$ as a storage node.

As is mentioned before, the original data file is divided into $k$ fragments and encoded into $n$ fragments stored at $n$ nodes, which is represented by $n$ edges from node $S$ to $\{x_{in}^i\}_{i=1}^n$ with infinite capacity.

At the initial time, the source node $S$ stores data to the $n$ storage nodes. Then $S$ becomes inactive and the $n$ storage nodes become active. When a node $x^j$ fails, it becomes inactive. The repair procedure creates an active newcomer $x^{n+1}$ to the graph as a substitute by connecting edges each with capacity $\beta$ from $d(\geq k)$ surviving active nodes, which means the newcomer downloads $\beta$ symbols from each of $d$ alive nodes. Note that he values of $\beta$ vary in heterogenous DSSs. The total number of active nodes remains $n$ after each repair. Based on the $(n,k)$ MDS, a data collector $DC$ connects to arbitrary $k$ active nodes with direct edges of infinite capacity, which means any $k$ nodes suffice to reconstruct the original data file. There may be many $DC$s connecting different sets of $k$ active nodes, but only one $DC$ is drawn visually.

An example of information flow graph for CSN-DSSs is illustrated in Figure~\ref{fig_IFGCSN} (a), where $n=6(L=1,R=3,S=1),k=4,d=5$. Two node $x^1$ and $x^6$ have failed successively. When node $x^1$ (a node in cluster 1) has failed, a newcomer $x^7$ downloads $\beta_I$ symbols each from $x^2$ and $x^3$ (two intra-cluster nodes) and $\beta_C$ symbols each from $x^4$, $x^5$ and $x^6$ (three cross-cluster nodes). When node $x^6$ has failed, a second newcomer $x^8$ is added by downloading $\beta_S$ symbols from five alive nodes (nodes in clusters or separate nodes). This model only handles one node failure at a time, downloading data from $d$ helper nodes, a subset of the $n-1$ alive nodes.

An important notion associated with the information flow graph is that of minimum cuts: In an IFG, a (direct) cut between $S$ and $DC$ is defined as a subset $\mathcal{C}$ of edges such that every directed path from $S$ to $DC$ contains at least one edge in $\mathcal{C}$. The min-cut is the cut between $S$ and $DC$ in which the total sum of the edge capacities is smallest.

\subsection{Problem Formulation}\label{subsec_PM}

As is proved in~\cite{Dimakis2010}, the reconstruction problem for every $DC$ reduces exactly to multicasting the original data from a single source $S$ to every $DC$. With relative works on network coding~\cite{Li2003, Ahlswede2006}, a tradeoff between node storage and repair bandwidth can be maintained by analysing the min-cuts between $S$ and all possible $DC$s.

Let $\mathcal{G}$ be the set of all possible information flow graphs of a CSN-DSS with node parameters $(n,k,L,R,S)$ and storage/repair parameters $(\alpha, d_I, \beta_I, d_C, \beta_C, \beta_S)$. Consider any given finite IFG $G\in \mathcal{G}$, with a finite set of data collectors. If the minimum of the min-cuts separating the source with each data collector is larger than or equal to the data object size $\mathcal{M}$, then there exists a linear network code such that all data collectors can recover the data object (see Proposition 1 in~\cite{Dimakis2010}). Denote the graph with minimum min-cut by $G^*$. The \textbf{capacity} of a CSN-DSS is defined as
\begin{small}\begin{eqnarray*}
\mathbf{C}(\mathcal{G})\triangleq \text{min-cut of }G^*.
\end{eqnarray*}\end{small}
In order to send data of size $\mathcal{M}$ from the source to any data collectors,
\begin{small}\begin{equation}\label{equ_bound}
\mathbf{C}(\mathcal{G})\geq\mathcal{M}
\end{equation}\end{small}
should be satisfied. As $\mathbf{C}(\mathcal{G})$ depends on node parameters and storage/repair parameters, when node parameters $(n,k,L,R,S)$ are fixed, a tradeoff between node storage $\alpha$ and repair bandwidth parameters $(d_I, \beta_I, d_C, \beta_C, \beta_S)$ will be characterized. The set of points $(\alpha, d_I, \beta_I, d_C, \beta_C, \beta_S)$ which satisfies $\mathbf{C}(\mathcal{G})\geq\mathcal{M}$ is feasible in the sense of reliably storing the original file of size $\mathcal{M}$.

Therefore, to analyse the capacity or tradeoff properties of CSN-DSSs is to analyse the min-cuts of IFGs for given node parameters, which is illuminated in the following sections.

\section{Analysis of CSN-DSSs}\label{sec_Analysis}

In this section, we investigate the min-cuts of IFGs for CSN-DSSs and prove the algorithms to generate the IFG achieving the capacity of given CSN-DSS. Some useful terms and notations are defined in Subsection~\ref{subsec_Mincut}. Subsection~\ref{subsec_NoSN} considers the min-cuts of IFGs with no separate selected nodes. With similar methods, the influence of one separate selected node is investigated in Subsection~\ref{subsec_WithSN}.

\subsection{Terminologies and Min-cut Calculation} \label{subsec_Mincut}
 For a given IFG $G \in \mathcal{G}$, the main problem is to find the min-cuts between source $S$ and each $DC$. Because there are no paths among different $DC$s, the min-cuts between $S$ and each different $DC$ can be analysed in the same way. For simplicity, assume the IFGs in the following parts only contain one single $DC$ and the min-cut only indicates the min-cut separating $S$ and $DC$. This subsection introduces some important terminologies such as repair sequence, selected node distribution, cluster order and relative location, with which the method for calculating min-cuts of IFGs is illuminated.

\textbf{Topological order:} Note that every directed acyclic graph has a topological order (see~\cite{Networkflows1993}, Chapter 3), which is an ordering of its vertices such that the existence of a path from $v_i$ to $v_j$ implies $i<j$. The $k$ output nodes connected by every $DC$ can be topologically sorted.

\textbf{Min-cut between $S$ and $DC$:} Let $\{x_{out}^{t_i}\}_{i=1}^k$ be the set of output nodes connected by the data collector, which are topologically ordered. The min-cut between $S$ and $DC$ can be calculated out by cutting $\{x_{out}^{t_i}\}_{i=1}^k$ one by one in the topological order, which is proved in~\cite{Dimakis2010} Lemma 2. Each time cutting a node, a part of the min-cut is determined, called a \textbf{part-cut value}. So the min-cut between $S$ and $DC$ is the summation of the $k$ part-cut values. For example, in Figure~\ref{fig_IFGCSN} (a), The $DC$ connects to four output nodes $x_{out}^6, x_{out}^5, x_{out}^7, x_{out}^8$, which are topologically ordered. Cut the five nodes one by one, as is shown by the red dashed line, we then get the four part-cut values $\alpha, \alpha, (\beta_C+2\beta_I)(\leq \alpha), (2\beta_C)(\leq \alpha)$ respectively. Note that if $\beta_C+2\beta_I \geq \alpha$, the cut line will be between $x_{in}^7$ and $x_{out}^7$. As a result, the third part-cut value will change to $\alpha$.

\textbf{Repair sequence and selected nodes}: It is obvious that when a $DC$ connects to a newcomer instead of connecting to an original node, the part-cut value may be smaller than $\alpha$. So smaller min-cuts can be derived as the $DC$ connects to more newcomers. Based on the MDS property mentioned in Subsection \ref{subsec_CSND},  a $DC$ connects to $k(\leq n)$ nodes, it is possible to find an IFG with a $DC$ only connecting to $k$ newcomers. Note that each newcomer corresponds to an original node failure and completes the repair procedure. Clearly, the topological order of $k$ output nodes $\{x_{out}^{t_i}\}_{i=1}^k$ corresponds to a \textbf{repair sequence} of original nodes. These original nodes contained in a repair sequence are called \textbf{selected nodes}.

In homogeneous distributed storage systems, all the storage/repair parameters ($\alpha, \beta, d$) are the same for different nodes. The repair sequence won't affect the minimum min-cut.
As is proved in~\cite{Dimakis2010}, when the $DC$ connects to $k$ newcomers and the newcomer $x^{t_i}$ downloads data from all the former newcomers $\{x^{t_j}\}_{j=1}^{i-1}$, the minimum min-cut is reached.

However, in a CSN-DSS, the storage/repair parameters are different for nodes in cluster and separate nodes. Different repair sequences result in different min-cuts of the IFGs. As is illustrated in Figure~\ref{fig_IFGCSN}, there two different repair sequences ($x^1$, $x^4$) in (a) and ($x^1$, $x^3$) in (b). The corresponding min-cuts are differently $2\alpha+2\beta_I+\beta_C$ and $2\alpha+2\beta_I+\beta_I$ respectively, as are shown by the red dash cut lines. Here we only consider two newcomer for the simplicity of the figures and assume $2\beta_I, \beta_C$ are less than $\alpha$. Consequently, the repair sequence determines the minimum min-cut directly.

\textbf{Selected node distribution:} For any given $k$ selected nodes, without loss of generality, assume the clusters are relabeled by the number of selected nodes in descending order. In the other words, cluster $1$ contains the most selected nodes, and cluster $L$ contains the least selected nodes. Define the \textbf{selected node distribution} as $\textbf{s}=(s_0,s_1,s_2,...,s_L)$, where $s_i(1\leq i\leq L)$ is the number of selected nodes in cluster $i$, and the first component $s_0$ is the number of selected separate nodes. Meanwhile, the set of all possible selected node distributions is defined as follows.
\begin{small}
\begin{equation*}
\mathcal{S}=\left\{\textbf{s}=(s_0,s_1,s_2,...,s_L):\ s_{i+1}\leq s_{i},\ 0\leq s_i\leq R,\ \text{for}\ 1\leq i\leq L;\ 0\leq s_0\leq S;\ \sum_{i=0}^L s_i=k\right\}
\end{equation*}
\end{small}
Note that the selected node distribution describes the total number of selected nodes in each cluster and separate nodes. Moreover, we need to represent the topological order of $k$ selected output nodes $\{x_{out}^{t_i}\}_{i=1}^k$ corresponding to $k$ selected original nodes, called cluster order.

\textbf{Cluster order:} Let the \textbf{cluster order} $\bm{\pi}=(\pi_1,\pi_2,...,\pi_k)$ denote the repair sequence, where $\pi_i(1\leq i\leq k)$ is the index of the cluster which contains the newcomer $x^{t_i}$ corresponding to the failed node. If the $i$th node is a separate node,
$\pi_i$ equals to $0$. Note that the cluster index is enough to define the repair sequence, because the storage/repair parameters for each node in the same cluster are the same.

For a certain selected node distribution $\textbf{s}=(s_0, s_1, s_2,..., s_L)$, there are different cluster orders. The set of possible cluster orders is defined as
\begin{small}
\begin{equation*}
\Pi(\textbf{s})=\Big\{\bm{\pi}=(\pi_1,..., \pi_k):\ \sum_{j=1}^k \mathbb{I}(\pi_j= i)=s_i,\  i\in\{0,1,...,L\}\Big\},
\end{equation*}
\end{small}
where $\mathbb{I}(\pi_j=i)$ is an indicator function which equals $1$ if $\pi_j=i$, and $0$ otherwise.

The relationship between selected node distribution and cluster order is illustrated in Figure~\ref{fig_sequence}, where the selected nodes are numbered. The selected node distribution $\textbf{s}=(1,4,3,1)$ means that, in the IFG of this CSN-DSS, the $DC$ connects $1$ separate node, $4$ nodes from cluster 1, $3$ nodes from cluster 2 and $1$ node from cluster 3. The cluster order $\bm{\pi}(\textbf{s})=(1,2,3,1,2,1,2,1,0)$ is a possible repair sequence for $\textbf{s}$.

The selected nodes are labeled from $1$ to $k$ as Figure~\ref{fig_sequence} shows, although it's enough to record the cluster number in the cluster order as the nodes in one cluster are undifferentiated. For the nodes in a cluster order $\bm{\pi}$, it's also needed to identify the precedence of selected nodes in each cluster. Assume the $i$-th node in cluster order $\bm{\pi}$ is the $h_{\bm{\pi}}(i)$-th node in its cluster. We called $h_{\bm{\pi}}(i)$ the \textbf{relative location} of the $i$-th node and
\begin{equation}\label{equ_hi}
h_{\bm{\pi}}(i)=\sum_{j=1}^i \mathbb{I}(\pi_j=\pi_i),
\end{equation}
where $1\leq i\leq k$. For an example, in Figure~\ref{fig_sequence}, the cluster order is $\bm{\pi}=(1,2,3,1,2,1,2,1,0)$ and $h_{\bm{\pi}}(4)=\mathbb{I}(\pi_1=\pi_4)+\mathbb{I}(\pi_2=\pi_4)+\mathbb{I}(\pi_3=\pi_4)+\mathbb{I}(\pi_4=\pi_4)=1+0+0+1=2$ where $\pi_4=1$. The corresponding sequence of $h_{\bm{\pi}}(i)$ is then $(1,1,1,2,2,3,3,4,1)$.

\begin{figure}[t]
  \centering
  \begin{minipage}[t]{0.50\textwidth}
    \centering
    \includegraphics[width=0.6\textwidth]{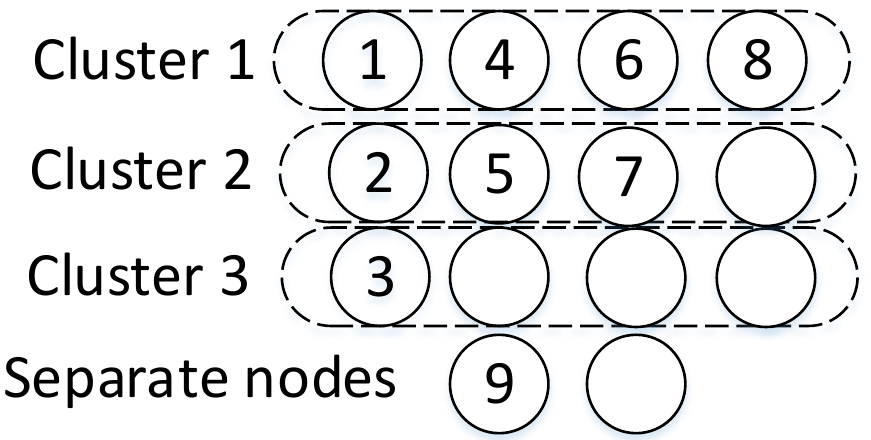}
    \caption{The numbered nodes are selected nodes and the selected node distribution is $\textbf{s}=(1, 4, 3, 1)$. A corresponding cluster order is $\bm{\pi}=(1,2,3,1,2,1,2,1,0)$ for the CSN-DSS with $n=15$ nodes, $k=9$ selected nodes.} \label{fig_sequence}
  \end{minipage}
  \hspace{0.08\textwidth}
  \begin{minipage}[t]{0.40\textwidth}
    \centering
    \includegraphics[width=0.7\textwidth]{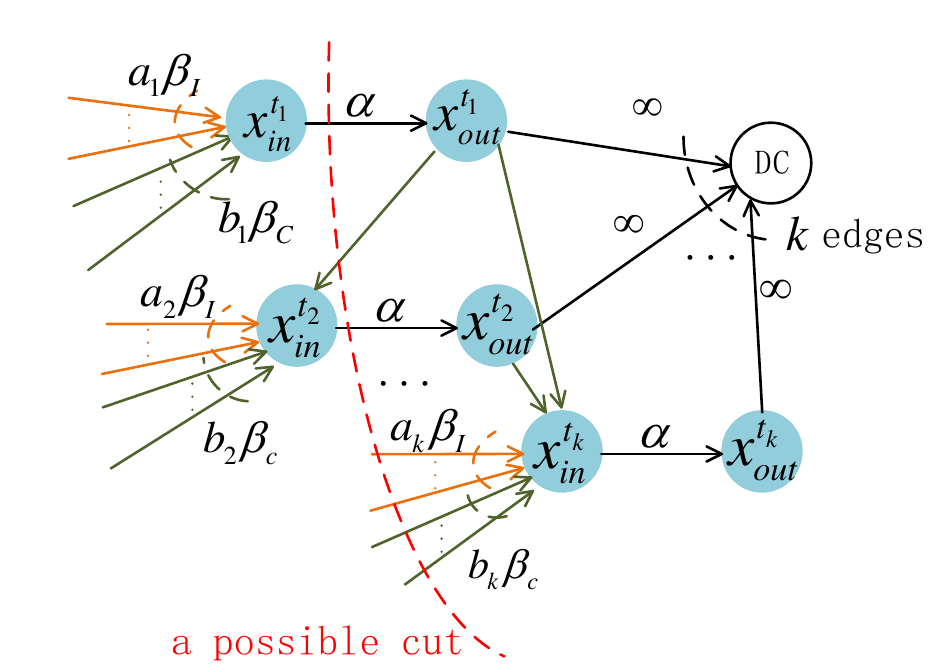}
    \caption{The min-cut IFG $G$ with no separate nodes in selected nodes} \label{fig_partcut}
  \end{minipage}
\end{figure}

\textbf{Calculating the min-cut between $S$ and $DC$ of an IFG:}

When calculating the min-cut of IFG $G$, it is to cut the output nodes contacted by the $DC$ $k$ times in topological order. Consider two disjoint sets $U$ and $\overline{U}$ of the nodes in $G$. Assume $S$ and the original nodes $x^i(1\leq i\leq n)$ are contained in $U$ and $DC$ is contained in $\overline{U}$ at the beginning. Every time we cut $G$, some nodes are added into $U$ and $\overline{U}$ respectively. When $G^*$ is cut $k$ times, all the nodes of $G^*$ are contained in $U$ or $\overline{U}$ and the set of edges emanating from $U$ to $\overline{U}$ is a cut between $S$ and $DC$. Let $\mathcal{C}$ denote the edges in the cut set, i.e., the set of edges going from $U$ to $\overline{U}$.

As is illustrated in Figure~\ref{fig_partcut}, the $k$ selected nodes $\{x^{t_i}\}_{i=1}^k$ are in topological order. When cutting node $x^{t_1}$, there are two possible cases.
\begin{adjustwidth}{0.5cm}{0cm}
  $\bullet$ If $x_{in}^{t_1}$ is in $U$, the edge $(x_{in}^{t_1}, x_{out}^{t_1})$ is contained in $\mathcal{C}$. The part-cut value equals $\alpha$.\\
  $\bullet$ In case of $x_{in}^{t_1}$ is in $\overline{U}$, since $x_{in}^{t_1}$ has an in-degree of $d=d_I+d_C$ and it is the topologically first newcomer in $\overline{U}$, all the incoming edges of $x_{in}^{t_1}$ must be in $\mathcal{C}$, which consists of $d_I$ edges from intra-cluster nodes and $d_C$ edges from cross-cluster nodes. The part-cut value equals $d_I\beta_I+d_C\beta_C$.
\end{adjustwidth}
\noindent When cutting node $x^{t_2}$, the first case is similar to $x^{t_1}$ and the part-cut value is also $\alpha$. For the second case, if $x^{t_2}$ is in the same cluster with $x^{t_1}$, the incoming edges of $x_{in}^{t_2}$ consist of $d_I-1$ edges from intra-cluster nodes and $d_C$ edges from cross-cluster nodes, then the part-cut value equals to $(d_I-1)\beta_I+d_C\beta_C$. On the other hand, if $x^{t_2}$ is in different clusters with $x^{t_1}$, the incoming edges of $x_{in}^{t_2}$ still contain $d_I$ edges from intra-cluster nodes but $d_C-1$ edges from cross-cluster nodes, and the part-cut value equals $d_I\beta_I+(d_C-1)\beta_C$.

\noindent Now consider node $x^{t_i}(1\leq i\leq k)$, the $i$th newcomer:
\begin{adjustwidth}{0.5cm}{0cm}
  $\bullet$ If $x_{in}^{t_i}\in U$, the edge $(x_{in}^{t_i},x_{out}^{t_i})$ must be in $\mathcal{C}$.\\
  $\bullet$ If $x_{in}^{t_i}\in \overline{U}$, node $x_{in}^{t_i}$ has $d=d_I+d_C$ incoming edges consisting of two parts: edges from nodes in $U$ and edges from nodes in $\overline{U}$. Cut set $\mathcal{C}$ only includes the first part of incoming edges among which let $a_i$ denote the number of \textbf{edges from intra-cluster nodes} and $b_i$ denote \textbf{edges from cross-cluster nodes}. It's obvious that $0\leq a_i\leq d_I$ and $0\leq b_i\leq d_C$. Note that when $i$ increases by $1$, either $a_i$ or $b_i$ will decrease by $1$ and will not decrease when $a_i$ or $b_i$ equals $0$. Since at most $i-1$ incoming edges of $x_{in}^{t_i}$ can be from $x_{out}^{t_j}(1\leq j\leq i-1)$ already contained in $\overline{U}$,
\begin{equation}
a_i+b_i \geq d-(i-1), \label{equ_bi}
\end{equation}
for $1\leq i\leq k$. Equality holds if $a_i$ and $b_i$ will not decrease to $0$ as $i$ increases.

\noindent If the $i$th selected node is a separate node, let $c_i$ denote the number of incoming edges of $x_{in}^{t_i}$ from $U$ and
\end{adjustwidth}

\begin{equation*}\label{equ_ci}
c_i=d-(i-1).
\end{equation*}
The respective values of $a_i$ and $b_i$ depend on the repair sequence of original nodes, namely, the selected node distribution $\textbf{s}$ and cluster order $\bm{\pi}$. The sum of the capacity of these edges is called the $i$th \textbf{part incoming weight}

\begin{equation} \label{equ_wi}
w_i(\textbf{s},\bm{\pi})=\begin{cases}
               a_i\beta_I+b_i\beta_C \ &\text{if the }i\text{th selected node is a cluster node},\\
               c_i\beta_S\ &\text{if the }i\text{th selected node is a separate node}
               \end{cases}.
\end{equation}
If the selected node distribution $\textbf{s}$ or cluster order $\bm{\pi}$ is fixed beforehand,  $w_i(\textbf{s},\bm{\pi})$ can be written as $w_i(\bm{\pi})$ or $w_i$ for simplicity. On the other hand, $a_i, b_i, c_i$ can be written as $a_i(\bm{\pi}), b_i({\bm{\pi}}), c_i({\bm{\pi}})$ for specific $\bm{\pi}$, respectively.

For a fixed selected node distribution $\textbf{s}$, the min-cut varies for different cluster orders $\bm{\pi}\in \Pi(\textbf{s})$. The min-cut for $\bm{\pi}=(\pi_1,\pi_2,...,\pi_k)$ is defined as

\begin{equation}\label{equ_MC}
MC(\textbf{s},\bm{\pi})\triangleq\sum_{i=1}^k\min\{w_i(\bm{\pi}),\alpha\}.
\end{equation}
With the above definitions, for an IFG $G$ with specified selected node distribution and cluster order, the min-cut can be figured out. In the following subsection, the min-cuts of cluster DSSs without separate nodes will be analysed.

\subsection{The min-cuts of IFGs with no separate selected nodes}\label{subsec_NoSN}

In this subsection, we assume the selected nodes are all cluster nodes, namely, $\textbf{s}=(s_0=0,s_1,...,s_L)$, in which case, our CSN-DSS model can be seen as cluster DSS model. For given node parameters $(n,k,L,R,S)$, to find the IFG $G^*$ with the minimum min-cut among all possible IFGs is equivalent to find the corresponding selected node distribution $\textbf{s}$ and cluster order $\bm{\pi}$. As \textbf{s} and $\bm{\pi}$ both influence the min-cuts, the analysis comprises two steps:
\begin{adjustwidth}{0.5cm}{0cm}
1. Fix the selected node distribution $\textbf{s}$ and analyse the min-cuts for different cluster orders $\bm{\pi}$ (see the proof of vertical order algorithm in Theorem~\ref{theorem_MC}).\\
2. Fix the cluster order generating algorithm and analyse the min-cuts for different $\textbf{s}$ (see the proof of horizontal selection algorithm in Therorm~\ref{theorem_MC2}).
\end{adjustwidth}

\noindent\textbf{Vertical order algorithm for $d_I=R-1$}

When the selected node distribution $\textbf{s}=(s_0=0,s_1,...,s_L)$ is fixed, the cluster order $\bm{\pi}^*=(\pi_1^*,\pi_2^*,...,\pi_k^*)$ generated by the vertical order algorithm achieves the minimum min-cut among all the possible IFGs, which is proved in Theorem~\ref{theorem_MC}. In~\cite{sohn2017capacity}, the above conclusion is considered based on the special assumption that all the alive nodes are used to repair the failed node, namely, $d_I=R-1$ and $d_C=n-R$. We will investigate and prove this problem in more general settings. The number of helper nodes from cross-cluster, $d_C$, varies from $k-R+1$ to $n-R$ and does not need to be $n-R$ of~\cite{sohn2017capacity}, following from the condition that $d_I+d_C\geq k$.

\begin{algorithm}\label{alg_vs}
\footnotesize
\caption{Vertical order algorithm}
\begin{algorithmic}[1]
\Require $\textbf{s}=(s_0=0,s_1,...,s_L).$ Initial cluster label $j \gets 1;$
\Ensure $\bm{\pi}^*=(\pi_1^*,...,\pi_k^*).$
\For{$i=1$ to k}
    \If {\text{the $i$-th selected node is a separate node}}
         $\pi^*_i\gets 0;$ $continue;$
    \EndIf
    \If{$s_j=0$} $j=1;$
    \Else \ $\pi_i^*\gets j;$ $s_j\gets s_i-1;$ $j\gets (j\mod L)+1;$  \EndIf
\EndFor
\end{algorithmic}
\end{algorithm}

An example of Algorithm 1 is illustrated in Figure~\ref{fig_aiproperty1} (a), where $\textbf{s}=(0,4,3,1)$. After three iterations, $s_3$ equals to $0$ and $\pi_4^*= 1$ in the next iteration. The final output of the algorithm is $\bm{\pi}^*=(1,2,3,1,2,1,2,1)$. If the selected node distribution $\textbf{s}$ is fixed, a cluster order determines an IFG and the min-cut $MC(\textbf{s},\bm{\pi})$ can be calculated. Note that $MC(\textbf{s},\bm{\pi})$ depends on the $i$-th part incoming weight $w_i(\bm{\pi})=a_i(\bm{\pi})\beta_I+b_i(\bm{\pi})\beta_C$ $(1\leq i\leq k)$ and a useful property for $a_i(\bm{\pi})$ (the coefficient of $\beta_I$) is proved in Lemma~\ref{pro_ai}.

\begin{figure*}[!t]
\begin{minipage}[t]{0.49\textwidth}
  \centering
  \subfloat[$\bm{\pi}^*={(1,2,3,1,2,1,2,1)}$]{\includegraphics[width=0.48\textwidth]{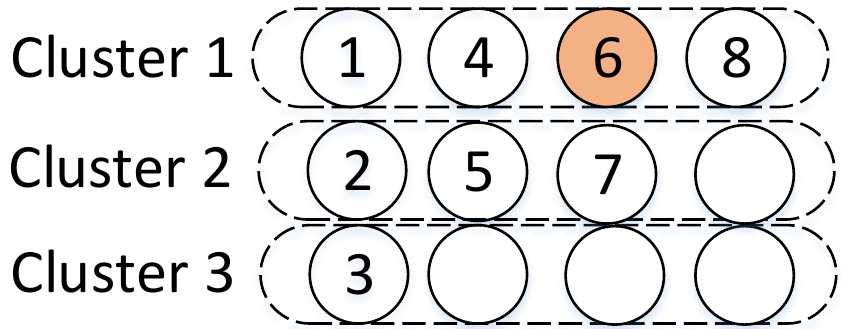}}
  \hspace{0.01\textwidth}
  \subfloat[$\bm{\pi}={(1,2,1,2,1,2,1,3)}$]{\includegraphics[width=0.48\textwidth]{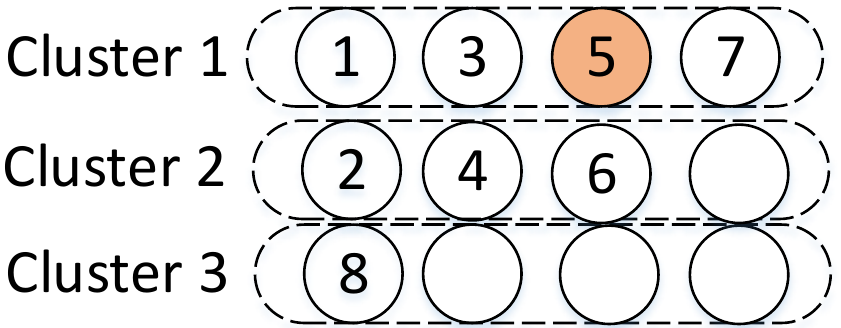}}
  \caption{The numbered nodes are selected nodes. There are two cluster orders $\bm{\pi}^*$ and $\bm{\pi}$ for a selected node distribution $\textbf{s}=(0,4,3,1)$.}\label{fig_aiproperty1}
\end{minipage}
\hspace{0.01\textwidth}
\begin{minipage}[t]{0.49\textwidth}
  \centering
  \subfloat[$\textbf{s}^*=(0,4,4,0)$]{\includegraphics[width=0.48\textwidth]{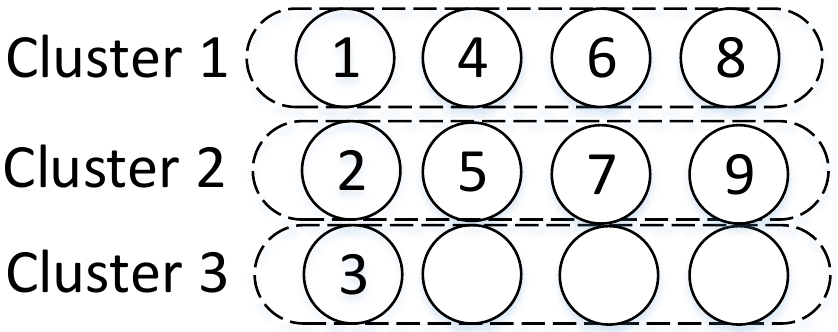}}
  \hspace{0.01\textwidth}
  \subfloat[$\textbf{s}=(0,3,3,2)$]{\includegraphics[width=0.48\textwidth]{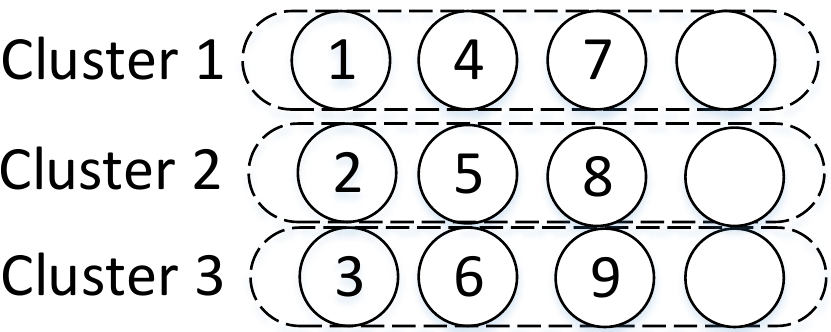}}
  \caption{The numbered nodes are selected nodes. There are two selected node distributions $\textbf{s}^*$ and $\textbf{s}$ for the CSN-DSS $(n=12,k=8,L=3,R=4,S=0)$.}\label{fig_twoSonePi}
\end{minipage}
\end{figure*}

\begin{lemma}\label{pro_ai}
For a given selected node distribution $\textbf{s}=(0,s_1,s_2,...,s_L)$ of the system model in Figure~\ref{fig_CN}, the multi-set\footnote{A multi-set is a generalization of the concept of a set that, unlike a set, allows multiple instances of the multi-set's elements.} $[a_i(\bm{\pi})]_{i=1}^k=[a_1(\bm{\pi}), a_2(\bm{\pi}),..., a_k(\bm{\pi})]$ consists of the same elements for all the different cluster orders $\bm{\pi}\in\Pi(\textbf{s})$ and $a_i(\bm{\pi})=d_I+1-h_{\bm{\pi}}(i)$ for $1\leq i\leq k$.
\end{lemma}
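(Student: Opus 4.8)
The plan is to read $a_i(\bm{\pi})$ directly off the structure of the min-cut IFG (as depicted in Figure~\ref{fig_partcut}), using the hypothesis $d_I=R-1$ in an essential way, and then to deduce the multi-set statement as a formality from the pointwise identity $a_i(\bm{\pi})=d_I+1-h_{\bm{\pi}}(i)$.

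The first step is to record the role of $d_I=R-1$. When the $i$-th selected node $x^{t_i}$, which lies in cluster $\pi_i$, is repaired, its newcomer is obliged to download from \emph{every} one of the $R-1=d_I$ other nodes then present in cluster $\pi_i$; there is no freedom in the choice of intra-cluster helpers. Hence $a_i(\bm{\pi})$ — the number of those $d_I$ incoming edges of $x_{in}^{t_i}$ whose tails lie in $U$ — equals $d_I$ minus the number of intra-cluster helpers of $x^{t_i}$ whose output vertex has already been moved into $\overline{U}$ by the time $x^{t_i}$ is cut.

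The second step is to count those helpers. In the min-cut IFG only the $k$ repairs producing $x^{t_1},\dots,x^{t_k}$ occur, the data collector is joined to all $k$ of these newcomers by infinite-capacity edges, and the cutting proceeds through $x_{out}^{t_1},\dots,x_{out}^{t_k}$ in topological order, so by the time $x^{t_i}$ is processed the vertices $x_{out}^{t_1},\dots,x_{out}^{t_{i-1}}$ lie in $\overline{U}$ while every original node lies in $U$. An intra-cluster helper of $x^{t_i}$ belongs to cluster $\pi_i$, and such a node lies in $\overline{U}$ exactly when it is one of the earlier selected newcomers $x^{t_j}$ with $j<i$ and $\pi_j=\pi_i$. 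Conversely, since no selected newcomer is ever replaced, the $h_{\bm{\pi}}(i)-1$ earlier selected newcomers of cluster $\pi_i$ occupy $h_{\bm{\pi}}(i)-1$ distinct slots of that cluster and are all still active when $x^{t_i}$ is created, hence — because $d_I=R-1$ — all of them belong to the $d_I$ intra-cluster helpers of $x^{t_i}$; the remaining $d_I-(h_{\bm{\pi}}(i)-1)$ such helpers are original nodes, lying in $U$. Using \eqref{equ_hi} this yields
\begin{equation*}
a_i(\bm{\pi})=d_I-\bigl(h_{\bm{\pi}}(i)-1\bigr)=d_I+1-h_{\bm{\pi}}(i),\qquad 1\le i\le k.
\end{equation*}

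Finally, since $[a_i(\bm{\pi})]_{i=1}^k=[\,d_I+1-h_{\bm{\pi}}(i)\,]_{i=1}^k$, the multi-set claim reduces to the fact that $[h_{\bm{\pi}}(i)]_{i=1}^k$ is the same for every $\bm{\pi}\in\Pi(\textbf{s})$: for each cluster $c\in\{1,\dots,L\}$ the indices $i$ with $\pi_i=c$ acquire, in increasing order of $i$, the relative locations $1,2,\dots,s_c$, so $[h_{\bm{\pi}}(i)]_{i=1}^k=\biguplus_{c=1}^{L}\{1,2,\dots,s_c\}$ depends on $\textbf{s}$ alone. I expect the main obstacle to be making the second step airtight — justifying that every earlier selected newcomer of cluster $\pi_i$ really is one of the $d_I$ helpers of $x^{t_i}$ with its output vertex in $\overline{U}$, and that no further node of cluster $\pi_i$ sits in $\overline{U}$. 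This is precisely where $d_I=R-1$ is indispensable: for $d_I<R-1$ a newcomer could be left out of the helper set, and after re-optimising the helper choice the identity would only survive in the form $a_i(\bm{\pi})=\max\{0,\,d_I+1-h_{\bm{\pi}}(i)\}$.
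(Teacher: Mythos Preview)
Your proposal is correct and follows essentially the same approach as the paper: establish the pointwise identity $a_i(\bm{\pi})=d_I+1-h_{\bm{\pi}}(i)$ by counting how many intra-cluster helpers of $x^{t_i}$ already have their output vertex in $\overline{U}$, then observe that the multi-set of relative locations $[h_{\bm{\pi}}(i)]_{i=1}^k$ is $\biguplus_{l=1}^{L}\{1,\dots,s_l\}$ and hence depends on $\textbf{s}$ only. The paper's proof makes exactly these two moves, albeit more briefly and via a worked example rather than the explicit case analysis you give; your additional emphasis on why $d_I=R-1$ is indispensable (and the remark on what survives for $d_I<R-1$) is a welcome clarification but not a different method.
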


\begin{proof}
Assume $\bm{\pi}^*=(\pi_1^*,..., \pi_k^*)$ and $\bm{\pi}=(\pi_1,...,\pi_k)$ are two different cluster orders for the same selected node distribution $\textbf{s}=(s_1,...,s_L)$. For example, in Figure~\ref{fig_aiproperty1}, the selected node distribution is $\textbf{s}=(4,3,1)$ and the corresponding two different cluster orders are $\bm{\pi}^*=(1,2,3,1,2,1,2,1)$ and $\bm{\pi}=(1,2,1,2,1,2,1,3)$. Note that the coloured node 6 in Figure~\ref{fig_aiproperty1} (a) is the third selected node in Cluster 1. When cutting node 6, two intra-cluster nodes are contained in $\overline{U}$ (node 1 and node 3), which won't be counted in the part-cut value. Then $a_6(\bm{\pi}^*)=d_I-2=R-1-2=1$. Now consider cluster order $\bm{\pi}$ in Figure~\ref{fig_aiproperty1} (b). Although the coloured node 5 is the 5th node in $\bm{\pi}$, it is the third selected node in Cluster 1 and $a_5(\bm{\pi})=d_I-2=R-1-2=1$. It's easy to see that the value of $a_i(\bm{\pi}^*)$ only depends on the number of selected nodes in the same cluster before repairing the current node, which is defined by the relative location $h_{\bm{\pi}^*}(i)$, namely, $a_i(\bm{\pi}^*)=d_I+1-h_{\bm{\pi}^*}(i)$ for $1\leq i\leq k$.

If $\textbf{s}=(0,s_1,s_2,...,s_L)$ is fixed,  the set of $h_{\bm{\pi}}(i)$ for nodes in Cluster $l$ is $\{1, 2,...,s_l\}$ for $1\leq l\leq L$, no matter where the nodes locate in the cluster orders. For all the cluster orders $\pi\in\Pi(\textbf{s})$, the multi-set $[a_i(\pi)]_{i=1}^k$ consists of  $L$ sets $\{d_I+1-1, d_I+1-2,...,d_I+1-s_l\}$ for $1\leq l\leq L$. 
\end{proof}

When the selected node distribution $\textbf{s}$ is fixed, a property of the min-cuts of IFGs for different cluster orders is proved in Theorem~\ref{theorem_MC}.

\begin{theorem}\label{theorem_MC}
For the given node parameters $(n,k,L,R,S)$ and any given selected node distribution $\textbf{s}\in \mathcal{S}$ with $s_0=0$, the vertical cluster order $\bm{\pi}^*$ obtained by the vertical order algorithm achieves the minimum min-cut among all the possible IFGs with $\textbf{s}$. In other words,

$$MC(\textbf{s},\bm{\pi}^*)\leq MC(\textbf{s},\bm{\pi}),$$
holds for arbitrary $\bm{\pi}\in\Pi(\textbf{s})$. $MC(\textbf{s},\bm{\pi})$ is defined by (\ref{equ_MC}).
\end{theorem}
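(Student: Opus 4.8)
The plan is an exchange argument (a bubble sort on the cluster order). First I would record the explicit form of the part incoming weight as a function of $\bm{\pi}$. Since $s_0=0$ every selected node is a cluster node, so by (\ref{equ_wi}) $w_i(\bm{\pi})=a_i(\bm{\pi})\beta_I+b_i(\bm{\pi})\beta_C$, and Lemma~\ref{pro_ai} gives $a_i(\bm{\pi})=d_I+1-h_{\bm{\pi}}(i)$. Feeding this value of $a_i$ into the min-cut computation of Subsection~\ref{subsec_Mincut} — and taking the smallest admissible $b_i$ compatible with $0\le b_i\le d_C$ and (\ref{equ_bi}), which is what a min-cut selects — forces $b_i(\bm{\pi})=\max\{0,\,d-(i-1)-a_i(\bm{\pi})\}$. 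Hence $w_i(\bm{\pi})=f\bigl(i,\,a_i(\bm{\pi})\bigr)$ where
$$f(i,a)\triangleq a\beta_I+\max\{0,\,d-(i-1)-a\}\beta_C ,$$
so that $MC(\textbf{s},\bm{\pi})=\sum_{i=1}^{k}\min\{f(i,a_i(\bm{\pi})),\alpha\}$ by (\ref{equ_MC}). Three properties of $f$ will be used: (i) $f(i,\cdot)$ is non-decreasing (increments $\beta_I$ or $\beta_I-\beta_C$, both $\ge0$ since $\beta_I\ge\beta_C\ge0$); (ii) $f(\cdot,a)$ is non-increasing in $i$; (iii) the increments $f(i,a{+}1)-f(i,a)=\beta_I-\mathbb{I}(a\le d-i)\beta_C$ are non-decreasing in $i$ (a discrete supermodularity). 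Finally, the algorithm's output $\bm{\pi}^{*}$ is the round-robin order over the clusters sorted by descending $s_l$, so $h_{\bm{\pi}^{*}}(i)$ — hence $a_i(\bm{\pi}^{*})$, hence $w_i(\bm{\pi}^{*})$ — is monotone: $h_{\bm{\pi}^{*}}$ is non-decreasing in $i$ (cf. Fig.~\ref{fig_aiproperty1}(a)). Also $\textbf{s}\in\mathcal{S}$ guarantees $h_{\bm{\pi}}(i)\le s_{\pi_i}\le R=d_I+1$, so all $a_i\ge0$.

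The core is a swapping lemma: for positions $p<q$ and reals $a\le b$,
$$\min\{f(p,b),\alpha\}+\min\{f(q,a),\alpha\}\ \le\ \min\{f(p,a),\alpha\}+\min\{f(q,b),\alpha\}.$$
Writing $g(x)=\min\{x,\alpha\}$ (concave, non-decreasing), this says the rise $g(f(q,b))-g(f(q,a))$ of $g$ across the interval $[f(q,a),f(q,b)]$ is at least its rise across $[f(p,a),f(p,b)]$. By (ii) the $q$-interval lies weakly to the left of the $p$-interval ($f(q,a)\le f(p,a)$, $f(q,b)\le f(p,b)$), and by (iii) it is weakly longer ($f(q,b)-f(q,a)\ge f(p,b)-f(p,a)$). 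Since $g'$ is non-negative and non-increasing, shifting an interval to the right cannot increase $\int g'$ and shrinking its length cannot increase $\int g'$; carrying the $q$-interval onto the $p$-interval by a rightward shift followed by a truncation from the right therefore only decreases the rise, which is the lemma. (Equivalently one argues by a short case split on where $\alpha$ sits among $f(p,a),f(q,a),f(p,b),f(q,b)$.)

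Now the exchange step. Suppose $\bm{\pi}\in\Pi(\textbf{s})$ has $h_{\bm{\pi}}$ not non-decreasing; pick $i$ with $h_{\bm{\pi}}(i)>h_{\bm{\pi}}(i+1)$, which forces $\pi_i\ne\pi_{i+1}$. Swapping these two entries yields $\bm{\pi}'\in\Pi(\textbf{s})$, and a direct check shows this changes $h_{\bm{\pi}}$ only at positions $i,i+1$, where the two values are interchanged (consistent with the invariance of the multiset $[a_i]_{i=1}^k$ in Lemma~\ref{pro_ai}). As $MC$ is a sum over positions and only the $i$-th and $(i{+}1)$-th summands change, the swapping lemma with $p=i$, $q=i+1$, $a=a_i(\bm{\pi})$, $b=a_{i+1}(\bm{\pi})$ (note $a<b$ because $h_{\bm{\pi}}(i)>h_{\bm{\pi}}(i+1)$) gives $MC(\textbf{s},\bm{\pi}')\le MC(\textbf{s},\bm{\pi})$. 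Each such swap strictly lowers the inversion count of $\bigl(h_{\bm{\pi}}(1),\dots,h_{\bm{\pi}}(k)\bigr)$, so after finitely many swaps we reach $\tilde{\bm{\pi}}$ with $h_{\tilde{\bm{\pi}}}$ non-decreasing and $MC(\textbf{s},\tilde{\bm{\pi}})\le MC(\textbf{s},\bm{\pi})$. But any order whose $h$-sequence is non-decreasing has $a_i=d_I+1-h(i)$ non-increasing, hence equal entrywise to the sorted (descending) enumeration of the fixed multiset $[a_i]_{i=1}^k$; thus $w_i(\tilde{\bm{\pi}})=f(i,a_i(\tilde{\bm{\pi}}))=w_i(\bm{\pi}^{*})$ for every $i$, so $MC(\textbf{s},\tilde{\bm{\pi}})=MC(\textbf{s},\bm{\pi}^{*})$. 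Combining yields $MC(\textbf{s},\bm{\pi}^{*})\le MC(\textbf{s},\bm{\pi})$.

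The main obstacle is the swapping lemma under the cap $\min\{\cdot,\alpha\}$: without the cap it is exactly the rearrangement inequality for the supermodular $f$ (pair the largest $a$-values with the earliest positions), but $\min\{f,\alpha\}$ need not be supermodular, so one must argue instead through the ``rise of a concave function over shifted, rescaled intervals'' together with the monotonicity facts (ii) and (iii). A subsidiary point needing care is the identification $b_i(\bm{\pi})=\max\{0,d-(i-1)-a_i(\bm{\pi})\}$ for the min-cut-achieving IFG and the check that this minimal $b_i$ is attainable simultaneously with the minimal $a_i$ of Lemma~\ref{pro_ai} (intra-cluster helper slots filled by same-cluster newcomers, cross-cluster slots by the remaining newcomers, with no conflict because $d_I=R-1$).
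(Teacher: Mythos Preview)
Your proof is correct and takes a genuinely different route from the paper's.

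The paper argues globally: it sorts the multisets $[w_i(\bm{\pi})]$ and $[w_i(\bm{\pi}^*)]$ and proves the weak-majorization inequality $\sum_{i=k-t+1}^k w_{u_i}(\bm{\pi}^*)\le \sum_{i=k-t+1}^k w_{u_i}(\bm{\pi})$ for every $t$ (your inequality would be their (\ref{equ_sumw})). This is obtained by first comparing the sorted sequences of $\phi_i=a_i+b_i$ pointwise, using only $a_i+b_i\ge d-(i-1)$ for general $\bm{\pi}$ and Lemma~\ref{pro_ai} to control the $a$-multisets, and then exploiting $\beta_I\ge\beta_C$ to pass from $\phi$ to $w$. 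The $\alpha$-cap is handled at the very end by a three-case split on how many terms on each side exceed $\alpha$. You instead argue locally: you make the dependence explicit via $w_i=f(i,a_i)$, isolate the monotonicity and discrete supermodularity of $f$, and prove a two-term swap inequality that already incorporates the cap through the concavity of $g(x)=\min\{x,\alpha\}$; bubble-sorting adjacent inversions of $h_{\bm{\pi}}$ then drives any $\bm{\pi}$ to one with non-decreasing $h$, which by Lemma~\ref{pro_ai} must have $a_i$ equal position-by-position to $a_i(\bm{\pi}^*)$. Your approach is more elementary and self-contained (no separate end-game for the cap), while the paper's route yields the stronger intermediate majorization statement (\ref{equ_sumw}), which could be reused. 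Your identification $b_i(\bm{\pi})=\max\{0,d-(i-1)-a_i(\bm{\pi})\}$ for the min-cut-achieving IFG is exactly what the paper uses implicitly for $\bm{\pi}^*$ (their split at the index $p$ where $b$ first vanishes); your observation that $d_I=R-1$ forces all intra-cluster nodes to be helpers, leaving the cross-cluster slots free, is the right justification that this value of $b_i$ is simultaneously attainable with the forced $a_i$.
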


\begin{proof}
Assume $(w_{u_1}(\bm{\pi}),..., w_{u_k}(\bm{\pi}))$ is a non-increasing order of elements in multi-set $[w_i(\bm{\pi})]_{i=1}^k$, namely, $w_{u_1}(\bm{\pi})\geq ...\geq w_{u_k}(\bm{\pi})$. This proof consists of two parts. In Part 1, we will prove that
\begin{small}
\begin{equation}\label{equ_sumw}
\sum_{i=k-t+1}^k w_{u_i}(\bm{\pi}^*) \leq \sum_{i=k-t+1}^k w_{u_i}(\bm{\pi}),
\end{equation}\end{small}
for any $1\leq t \leq k$. Note that (\ref{equ_sumw}) means the sum of the minimum $t$ elements in multi-set $[w_i(\bm{\pi}^*)]_{i=1}^k$ is no more than the sum of the minimum $t$ elements in $[w_i(\bm{\pi})]_{i=1}^k$ for any $1\leq t \leq k$. With the help of (\ref{equ_sumw}), Part 2 completes the proof by considering the relationship between $\alpha$ and $w_i(\bm{\pi}^*)$ in formula (\ref{equ_MC}).

\noindent\textbf{Part 1:} As $w_i(\bm{\pi})= a_i(\bm{\pi})\beta_I+b_i(\bm{\pi})\beta_C$(see equation (\ref{equ_wi})), the coefficient of $\beta_I$ and $\beta_C$ are considered respectively, and let
\begin{equation*}
\phi_i(\bm{\pi})\triangleq a_i(\bm{\pi})+b_i(\bm{\pi}),
\end{equation*}
for simplicity. In the following part, we will compare $\phi_i(\bm{\pi}^*)$ and $\phi_i(\bm{\pi})$ one by one and prove inequality (\ref{equ_phi}) which is important to prove (\ref{equ_sumw}).

For any cluster order $\bm{\pi}\in\Pi(\textbf{s})$, let sequence $(\phi_{t_1}(\bm{\pi}),...,\phi_{t_k}(\bm{\pi}))$ denote the non-increasing order of the elements of multi-set $[\phi_i(\bm{\pi})]_{i=1}^k$, namely, $\phi_{t_1}(\bm{\pi})\geq ...\geq \phi_{t_k}(\bm{\pi}).$ Based on Algorithm 1, it's easy to verify that the sequences $(\phi_1(\bm{\pi}^*),...,\phi_k(\bm{\pi}^*))$, $(b_1(\bm{\pi}^*),...,b_k(\bm{\pi}^*))$ and $(w_1(\bm{\pi}^*),...,w_k(\bm{\pi}^*))$ are all non-increasing. Assume $p$ ($1\leq p\leq k$) is the integer that satisfies the following conditions:
$$b_p(\bm{\pi}^*)=0 \text{ and } b_{p-1}(\bm{\pi}^*)>0,$$
meaning that exactly $d_C$ selected cross-cluster nodes are already cut when cutting the $p$-th node by the cluster order $\bm{\pi}^*$. Then
$\phi_i(\bm{\pi}^*)=a_i(\bm{\pi}^*)+b_i(\bm{\pi}^*)=d-i+1$
for $1\leq i\leq p$. For the remaining nodes in $\bm{\pi}^*$, $b_i(\bm{\pi}^*)=0$ and $\phi_i(\bm{\pi}^*)=a_i(\bm{\pi}^*)$ $(i>p)$.

\noindent $\bullet$ When $1\leq i\leq p$, as $\phi_i(\bm{\pi})\geq d-i+1$ (see (\ref{equ_bi})), $\phi_{t_i}(\bm{\pi})\geq \phi_i(\bm{\pi})\geq d-i+1=\phi_i(\bm{\pi}^*)$.

\noindent $\bullet$ When $p+1\leq i \leq k$, as $a_i(\bm{\pi}^*)\geq a_{i+1}(\bm{\pi}^*)\geq ... \geq a_k(\bm{\pi}^*)$, \textbf{at most $k-i$ elements} of multi-set $[a_i(\bm{\pi}^*)]_{i=1}^k$ are less than $a_i(\bm{\pi}^*)$.

\begin{adjustwidth}{0.5cm}{0cm}
    $\blacktriangleright$ If $a_{t_i}(\bm{\pi}) < a_i(\bm{\pi}^*)$, we first assume $a_{t_j}(\bm{\pi})<a_j(\bm{\pi}^*)$ for all $i+1\leq j\leq k$ and will derive a contradiction. It's obvious that \textbf{at least $k-i+1$ elements} of multi-set $[a_{t_i}(\bm{\pi})]_{i=1}^k$ are less than $a_i(\bm{\pi}^*)$. As is proved in Lemma~\ref{pro_ai}, $[a_{t_i}(\bm{\pi})]_{i=1}^k$ and $[a_i(\bm{\pi})^*]_{i=1}^k$ contain the same elements, $[a_i(\bm{\pi})^*]_{i=1}^k$ then contains \textbf{at least $k-i+1$ elements} of multi-set $[a_{t_i}(\bm{\pi})]_{i=1}^k$ are less than $a_i(\bm{\pi}^*)$, which a contradiction. There then exists at least one $a_{t_j}(\bm{\pi})$ ($i+1\leq j\leq k$) not less than $a_i(\bm{\pi}^*)$, and $\phi_{t_i}(\bm{\pi})\geq \phi_{t_j}(\bm{\pi})\geq a_{t_j}(\bm{\pi})\geq a_{i}(\bm{\pi}^*)= \phi_i(\bm{\pi})$.\\
    $\blacktriangleright$ If $a_{t_i}(\bm{\pi})\geq a_i(\bm{\pi}^*)$, $\phi_{t_i}(\bm{\pi})=a_{t_i}(\bm{\pi})+b_{t_i}(\bm{\pi}) \geq a_{t_i}(\bm{\pi})>a_i(\bm{\pi}^*)=\phi_i(\bm{\pi}^*)$.
\end{adjustwidth}
\noindent Then it can be proved that
\begin{small}
\begin{eqnarray}
\sum_{i=k-t+1}^k\big(a_i(\bm{\pi}^*)+b_i(\bm{\pi}^*)\big)&=&\sum_{i=k-t+1}^k \phi_{i}(\bm{\pi}^*)\leq\sum_{i=k-t+1}^k \phi_{t_i}(\bm{\pi}) \overset{(a)}\leq\sum_{i=k-t+1}^k \big(a_{u_i}(\bm{\pi})+b_{u_i}(\bm{\pi})\big)\label{equ_phi}
\end{eqnarray}
\end{small}
Note that $\sum_{i=k-t+1}^k a_{u_i}(\bm{\pi})+b_{u_i}(\bm{\pi})=\sum_{i=k-t+1}^k\phi_{u_i}(\bm{\pi})$ is the sum of $t$ elements of multi-set $[\phi_i(\bm{\pi})]_{i=1}^k$. Inequality (a) is based on the fact that $\sum_{i=k-t+1}^k \phi_{t_i}(\bm{\pi})$ is the sum of the minimum $t$ elements of $[\phi_i(\bm{\pi})]_{i=1}^k$, not greater than the sum of any $t$ elements of $[\phi_i(\bm{\pi})]_{i=1}^k$.

With the above consequence, it can be proved that
\begin{small}
\begin{eqnarray*}
& &\sum_{i=k-t+1}^k w_{u_i}(\bm{\pi}^*)=\sum_{i=k-t+1}^{k}w_i(\bm{\pi}^*)=\sum_{i=k-t+1}^{k}\left(a_i(\bm{\pi}^*)*\beta_I+b_i(\bm{\pi}^*)*\beta_C\right) \notag\\
&=&\sum_{i=k-t+1}^{k}a_i(\bm{\pi}^*)*\beta_I+\left(\sum_{i=k-t+1}^{k}\left(a_i(\bm{\pi}^*)+b_i(\bm{\pi}^*)\right)-\sum_{i=k-t+1}^{k}a_i(\bm{\pi}^*)\right)*\beta_C \notag\\
&\overset{(b)}{\leq}&\sum_{i=k-t+1}^{k}a_i(\bm{\pi}^*)*\beta_I+\left(\sum_{i=k-t+1}^k \left(a_{u_i}(\bm{\pi})+b_{u_i}(\bm{\pi})\right)-\sum_{i=k-t+1}^{k}a_i(\bm{\pi}^*)\right)*\beta_C \notag\\
&\overset{(c)}{\leq}&\sum_{i=k-t+1}^{k}a_i(\bm{\pi}^*)*\beta_I+\left(\sum_{i=k-t+1}^k a_{u_i}(\bm{\pi}) -\sum_{i=k-t+1}^{k}a_i(\bm{\pi}^*)\right)*\beta_I+\sum_{i=k-t+1}^k b_{u_i}(\bm{\pi})*\beta_C \notag\\
&=&\sum_{i=k-t+1}^k a_{u_i}(\bm{\pi})*\beta_I+\sum_{i=k-t+1}^k b_{u_i}(\bm{\pi})*\beta_C =\sum_{i=k-t+1}^{k}w_{u_{i}}(\bm{\pi}),
\end{eqnarray*}
\end{small}
where $(b)$ is based on inequality~(\ref{equ_phi}) and $(c)$ is because of $\beta_I \geq \beta_C$.

\noindent\textbf{Part 2:}
Assume there are $t_1$ elements in $[w_i(\bm{\pi}^*)]_{i=1}^k$ and $t_2$ elements in $[w_i(\bm{\pi})]_{i=1}^k$ greater than $\alpha$.\\
\begin{adjustwidth}{0.5cm}{0cm}
$\bullet$ If $t_1 < t_2$,
$MC(\textbf{s},\bm{\pi}^*)=t_1\alpha+\sum_{i=t_1+1}^{t_2} w_i(\bm{\pi}^*)+\sum_{i=t_2+1}^{k} w_i(\bm{\pi}^*)\leq t_2\alpha+\sum_{i=t_2+1}^k w_{u_i}(\bm{\pi})=MC(\textbf{s},\bm{\pi})$.\\
$\bullet$ If $t_1=t_2$, it's easy to prove $MC(\textbf{s},\bm{\pi}^*)\leq MC(\textbf{s},\bm{\pi})$, using $(\ref{equ_sumw})$.\\
$\bullet$ If $t_1 > t_2$,
$MC(\textbf{s},\bm{\pi}^*)=t_2\alpha+\sum_{i=t_2+1}^{t_1} \min\{w_i(\bm{\pi}^*), \alpha\}+\sum_{i=t_1+1}^{k} w_i(\bm{\pi}^*)
\leq t_2\alpha+\!\sum_{i=t_2+1}^k w_i(\bm{\pi}^*)\leq t_2\alpha+\sum_{i=t_2+1}^k w_{u_i}(\bm{\pi})=MC(\textbf{s},\bm{\pi})$. 
\end{adjustwidth}
\end{proof}
The consequence of Theorem~\ref{theorem_MC} can be verified by Algorithm 1 and the numerical examples illustrated in Figure~\ref{fig_aiproperty1} (a) and (b).
To analyse the influence of selected node distribution, for any input $\textbf{s}$, let
\begin{equation}\label{equ_pis}
 \bm{\pi}^*(\textbf{s})=(\pi^*(\textbf{s})_1,\pi^*(\textbf{s})_2,...,\pi^*(\textbf{s})_k)
\end{equation}
denote \textbf{the unique cluster order} generated by the vertical order algorithm. In the following part,
we investigate the min-cuts for different selected node distributions $\textbf{s}$, where the cluster orders are $\bm{\pi}^*(\textbf{s})$.

\noindent\textbf{Horizontal selection algorithm for $d_I=R-1$}

The vertical order algorithm generates the cluster order achieving the minimum min-cut for any given selected node distribution $\textbf{s}$. In this part, we assume all the cluster orders are generated by the vertical order algorithm and analyse the min-cuts for different selected node distributions. In Theorem~\ref{theorem_MC2}, it is proved that the minimum min-cut among possible IFGs is achieved by the selected node distribution $\textbf{s}^*=(s_0^*,s_1^*,s_2^*,...,s_L^*)$ generated by the horizontal selection algorithm and the cluster order $\bm{\pi}^*(\textbf{s}^*)$ generated by the vertical order algorithm.

\noindent\textbf{Algorithm 2:} {\emph{Horizontal selection algorithm:}

\noindent\emph{The horizontal selected node distribution} is $\textbf{s}^*=(s_0^*, s_1^*,s_2^*,...,s_L^*)\ (\sum_{i=0}^L s_i^*=k)$, where
\begin{small}
\begin{equation*}
s_i^*=\begin{cases}
               R,                           &i\leq \lfloor \frac{k-s_0^*}{R}\rfloor\\
     k-\lfloor \frac{k-s_0^*}{R}\rfloor R, \ \ \  &i=\lfloor \frac{k-s_0^*}{R}\rfloor+1 \\
               0,                           &i>\lfloor \frac{k-s_0^*}{R}\rfloor+1
    \end{cases}.
\end{equation*}\end{small}

In this section, the situation without separate nodes is considered, namely, $s_0^*=0$. An example of this algorithm is illustrated in Figure~\ref{fig_twoSonePi} (a), where $k=8, R=4$. Based on the horizontal algorithm, $s_1^*=R=4$, $s_2^*=R=4$ and $s_3=k-2R=0$. Another property of $a_i(\bm{\pi})$, the coefficients of $\beta_I$, is proved in the following lemma, when the horizontal selected algorithm is used.

\begin{lemma}\label{pro_ai2}
For the given node parameters $(n,k,L,R,S)$, the coefficients of $\beta_I$ satisfies that
\begin{equation*}
a_i(\bm{\pi}^*(\textbf{s}^*))\leq a_i(\bm{\pi}^*(\textbf{s}))
\end{equation*}
for $1\leq i \leq k$, where $\textbf{s}^*$ is the selected node distribution generated by the horizontal selection algorithm and $\textbf{s}\in \mathcal{S}$ with $s_0=0$. Note that $\bm{\pi}^*(\cdot)$ is defined by (\ref{equ_pis}).
\end{lemma}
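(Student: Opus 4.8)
\textbf{Proof plan for Lemma~\ref{pro_ai2}.}

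The plan is to exploit Lemma~\ref{pro_ai}, which tells us that $a_i(\bm{\pi}^*(\textbf{s})) = d_I + 1 - h_{\bm{\pi}^*(\textbf{s})}(i)$, so the inequality $a_i(\bm{\pi}^*(\textbf{s}^*)) \le a_i(\bm{\pi}^*(\textbf{s}))$ is equivalent to $h_{\bm{\pi}^*(\textbf{s}^*)}(i) \ge h_{\bm{\pi}^*(\textbf{s})}(i)$ for every $1 \le i \le k$. In words: under the vertical order algorithm, the horizontal selected node distribution maximizes, position by position, the relative location of the newcomer being cut. So the whole lemma reduces to a purely combinatorial statement about the integer sequences $h_{\bm{\pi}^*(\textbf{s})}(\cdot)$ produced by Algorithm~1.

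First I would pin down the shape of $h_{\bm{\pi}^*(\textbf{s})}(\cdot)$ for an arbitrary $\textbf{s}$ with $s_0 = 0$. Algorithm~1 distributes the selected cluster nodes round-robin over the clusters, skipping a cluster once it is exhausted; since clusters are relabeled so that $s_1 \ge s_2 \ge \dots \ge s_L$, the first $L' := |\{\, l : s_l > 0 \,\}|$ steps produce relative locations $1,1,\dots,1$ (one per nonempty cluster), then the next $|\{\, l : s_l \ge 2 \,\}|$ steps produce $2,2,\dots,2$, and so on; in general the multiset $\{\, i : h_{\bm{\pi}^*(\textbf{s})}(i) = m \,\} $ has size $c_m(\textbf{s}) := |\{\, l : s_l \ge m \,\}|$, the $m$-th "column height" of the Young-diagram-like shape of $\textbf{s}$, and these blocks appear in increasing order of $m$. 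Consequently $h_{\bm{\pi}^*(\textbf{s})}(i) = m$ exactly when $\sum_{j<m} c_j(\textbf{s}) < i \le \sum_{j \le m} c_j(\textbf{s})$. The horizontal distribution $\textbf{s}^*$, being as "tall and narrow" as possible (columns of height $R$ until $k$ runs out), has column heights $c_m(\textbf{s}^*)$ that are as large as possible for small $m$ — indeed $(c_1(\textbf{s}^*), c_2(\textbf{s}^*), \dots)$ majorizes, in the partial-sum sense, $(c_1(\textbf{s}), c_2(\textbf{s}), \dots)$ for any competing $\textbf{s}$.

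Now I would translate that majorization into the desired pointwise inequality on $h$. Since $h_{\bm{\pi}^*(\textbf{s})}(i) = \min\{\, m : \sum_{j \le m} c_j(\textbf{s}) \ge i \,\}$, and since $\sum_{j\le m} c_j(\textbf{s}^*) \ge \sum_{j \le m} c_j(\textbf{s})$ for every $m$ (the partial-sum comparison of column heights, which follows from the fact that $\textbf{s}^*$ fills columns $1, 2, \dots$ to the maximum height $R$ before opening a new column while any other $\textbf{s}$ cannot exceed this running total — formally because $\sum_{j \le m} c_j(\textbf{s}) = \sum_l \min(s_l, m) \le \min(k, mR) = \sum_{j \le m} c_j(\textbf{s}^*)$), the threshold $m$ at which the partial sum first reaches $i$ is no later for $\textbf{s}^*$ than for $\textbf{s}$. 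Hence $h_{\bm{\pi}^*(\textbf{s}^*)}(i) \le h_{\bm{\pi}^*(\textbf{s})}(i)$ — wait, that is the reverse; let me recheck: larger partial sums mean the threshold is reached \emph{sooner}, i.e.\ at a \emph{smaller} $m$, so $h_{\bm{\pi}^*(\textbf{s}^*)}(i) \le h_{\bm{\pi}^*(\textbf{s})}(i)$, giving $a_i(\bm{\pi}^*(\textbf{s}^*)) = d_I + 1 - h_{\bm{\pi}^*(\textbf{s}^*)}(i) \ge d_I + 1 - h_{\bm{\pi}^*(\textbf{s})}(i) = a_i(\bm{\pi}^*(\textbf{s}))$. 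That is exactly the opposite of what is claimed, so I expect the actual argument runs through the \emph{complementary} order: the vertical algorithm assigns the \emph{tallest} columns first, so small $i$ get small $h$, and it is the \emph{tail} of the sequence where $\textbf{s}^*$ forces the large values; the clean way is to sort each multiset $[h_{\bm{\pi}^*(\textbf{s})}(i)]_{i=1}^k$ and compare — but Algorithm~1 already outputs them essentially sorted, so one must be careful about which end the inequality holds at.

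\textbf{The main obstacle} is exactly this bookkeeping: verifying that Algorithm~1 produces the $h$-sequence in the precise order claimed (nondecreasing in blocks, or the transpose thereof), and then nailing the direction of the comparison between $\textbf{s}^*$ and a general $\textbf{s}$. The right framing is the identity $\sum_{j \le m} c_j(\textbf{s}) = \sum_{l=1}^L \min(s_l, m)$ together with $\sum_{l=1}^L \min(s_l, m) \le \min(k, mR)$, with equality for $\textbf{s} = \textbf{s}^*$ as long as $m \le \lfloor k/R \rfloor + 1$; once this is in hand, the pointwise statement about $h$ (in whichever coordinate ordering the lemma intends, which one reads off from Algorithm~1's output order) follows by a routine inversion of partial sums, and Lemma~\ref{pro_ai} converts it to the statement about $a_i$. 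I would present the column-height identity and its extremal property as the crux, relegate the order-of-output verification to a short appeal to the structure of Algorithm~1, and finish with the one-line passage through Lemma~\ref{pro_ai}.
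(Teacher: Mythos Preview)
Your overall strategy is the same as the paper's: reduce via Lemma~\ref{pro_ai} to proving $h_{\bm{\pi}^*(\textbf{s}^*)}(i)\ge h_{\bm{\pi}^*(\textbf{s})}(i)$, observe that Algorithm~1 outputs the $h$-values in blocks of size $c_m(\textbf{s})=|\{l:s_l\ge m\}|$, and compare the partial sums $\sum_{j\le m}c_j(\textbf{s})=\sum_l\min(s_l,m)$. The paper calls these partial sums ``jumping points'' $j_m(\bm{\pi}^*(\textbf{s}))$ and proves $j_m(\bm{\pi}^*(\textbf{s}^*))\le j_m(\bm{\pi}^*(\textbf{s}))$ by induction; your partial-sum formulation is in fact cleaner.

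But the crux of your argument is stated with the wrong sign, and your attempted formula for the extremal case is false. You assert $\sum_{j\le m}c_j(\textbf{s}^*)\ge\sum_{j\le m}c_j(\textbf{s})$ and justify it via $\sum_l\min(s_l,m)\le\min(k,mR)=\sum_{j\le m}c_j(\textbf{s}^*)$. Neither claim holds: take $L=3$, $R=4$, $k=8$, $\textbf{s}^*=(0,4,4,0)$, $\textbf{s}=(0,3,3,2)$, $m=1$; then $\sum_{j\le 1}c_j(\textbf{s}^*)=2$ while $\sum_{j\le 1}c_j(\textbf{s})=3$ and $\min(k,mR)=4$. ``Tall and narrow'' means \emph{few} columns, hence \emph{short} rows, hence \emph{small} $c_m(\textbf{s}^*)$, not large. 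The correct inequality is
\[
\sum_{j\le m}c_j(\textbf{s}^*)\;\le\;\sum_{j\le m}c_j(\textbf{s})\qquad\text{for all }m,
\]
which gives a \emph{later} threshold for $\textbf{s}^*$, hence $h_{\bm{\pi}^*(\textbf{s}^*)}(i)\ge h_{\bm{\pi}^*(\textbf{s})}(i)$, hence $a_i(\bm{\pi}^*(\textbf{s}^*))\le a_i(\bm{\pi}^*(\textbf{s}))$ --- exactly the lemma. Your ``complementary order'' remark does not repair this; the output order of Algorithm~1 is already the nondecreasing one you described, and the error is purely in the direction of the partial-sum comparison.

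To actually prove the correct direction you can either follow the paper's inductive argument on jumping points, or argue more directly: $\textbf{s}^*$ majorizes every $\textbf{s}\in\mathcal{S}$ with $s_0=0$ (its prefix sums $\sum_{l\le t}s_l^*=\min(tR,k)$ dominate those of any $\textbf{s}$), and $x\mapsto\min(x,m)$ is concave, so $\sum_l\min(s_l,m)$ is Schur-concave and is therefore minimized at $\textbf{s}^*$. That one observation replaces the paper's induction and finishes your proof.
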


\begin{proof}
  As is proved in Lemma~\ref{pro_ai}, the coefficient of $\beta_I$, $a_i(\bm{\pi}^*(\textbf{s}))=d_I+1-h_{\bm{\pi}^*(\textbf{s})}(i)$.  We will analyse the relative location $h_{\bm{\pi}^*(\textbf{s})}(i)$ with jumping points defined in (\ref{equ_jumping}) and  prove that $h_{\bm{\pi}^*(\textbf{s}^*)}(i)\geq h_{\bm{\pi}^*(\textbf{s})}(i)$ for $1\leq i\leq k$.

  Based on the vertical order algorithm, it's easy to verify the following two properties of $h_{\bm{\pi}^*(\textbf{s})}(i)$:

  $\bullet$ $1\leq h_{\bm{\pi}^*(\textbf{s})}(i)\leq R$ for $1\leq i\leq k$,

  $\bullet$ $0\leq h_{\bm{\pi}^*(\textbf{s})}(i+1)-h_{\bm{\pi}^*(\textbf{s})}(i)\leq 1$ for $1\leq i \leq k-1$,

  \noindent meaning that $h_{\bm{\pi}^*(\textbf{s})}(i)$ is non-decreasing and will increase one time at most by $1$. For an example, the sequence of relative location $h_{\bm{\pi}^*(\textbf{s})}(i)$ is $(1,1,1,2,2,3,3,4,4)$ in Figure~\ref{fig_twoSonePi} (a) and when $i=3,5 \text{ or } 7$, the value of $h_{\bm{\pi}^*(\textbf{s})}(i)$ will increase by $1$ for the next time. These values of $i$ are called \textbf{jumping points}, denoted by
  \begin{equation}\label{equ_jumping}
  J(\bm{\pi}^*(\textbf{s}))=(j_0(\bm{\pi}^*(\textbf{s})),j_1(\bm{\pi}^*(\textbf{s})),...,j_{s_1-1}(\bm{\pi}^*(\textbf{s})),j_{s_1}(\bm{\pi}^*(\textbf{s}))),
  \end{equation}
  which depends on $\textbf{s}=(s_0,s_1,...,s_k)$. We set $j_0(\bm{\pi}^*(\textbf{s}))=0$ and $j_{s_1}(\bm{\pi}^*(\textbf{s}))=k$ as the beginning and ending of the jumping point vector, then
  $$j_i(\bm{\pi}^*(\textbf{s}))-j_{i-1}(\bm{\pi}^*(\textbf{s}))=\#\{t| h_{\bm{\pi}^*(\textbf{s})}(t)=i, 1\leq t \leq k\}$$for $1 \leq i\leq s_1$. Based on the definition of cluster order, it's obvious that
  \begin{equation}\label{equ_ji}
    j_i(\bm{\pi}^*(\textbf{s}))-j_{i-1}(\bm{\pi}^*(\textbf{s}))\geq j_{i+1}(\bm{\pi}^*(\textbf{s}))-j_i(\bm{\pi}^*(\textbf{s}))
  \end{equation}
  for $1\leq i\leq s_1-1$.

  We will use induction method to prove $j_i(\bm{\pi}^*(\textbf{s}^*))\leq j_i(\bm{\pi}^*(\textbf{s}))$ for $1\leq i \leq s_1-1$. Based on the vertical order algorithm, $j_1(\bm{\pi}^*(\textbf{s}^*))\leq j_1(\bm{\pi}^*(\textbf{s}))$. Assume $j_t(\bm{\pi}^*(\textbf{s}^*))\leq j_t(\bm{\pi}^*(\textbf{s}))$, it's needed to prove $j_{t+1}(\bm{\pi}^*(\textbf{s}^*))\leq j_{t+1}(\bm{\pi}^*(\textbf{s}))$.

  There are $k-j_{t+1}(\bm{\pi}^*(\textbf{s}^*))$ nodes remaining after jumping point $j_{t+1}(\bm{\pi}^*(\textbf{s}^*))$. Based on the horizontal selection algorithm,
  $$j_{i+1}(\bm{\pi}^*(\textbf{s}^*))-j_i(\bm{\pi}^*(\textbf{s}^*))=j_1(\bm{\pi}^*(\textbf{s}^*)) \text{ or } j_1(\bm{\pi}^*(\textbf{s}^*))-1$$
  for $1\leq i\leq R-1$.
  Then
  \begin{equation}\label{equ_j1}
    k-j_{t+1}(\bm{\pi}^*(\textbf{s}^*))\geq(R-t-1)(j_{t+1}(\bm{\pi}^*(\textbf{s}^*))-j_{t}(\bm{\pi}^*(\textbf{s}^*))-1).
  \end{equation}
  Assume
  \begin{equation}\label{equ_assume}
    j_{t+1}(\bm{\pi}^*(\textbf{s}^*))>j_{t+1}(\bm{\pi}^*(\textbf{s})).
  \end{equation}
  As $j_t(\bm{\pi}^*(\textbf{s}^*))\leq j_t(\bm{\pi}^*(\textbf{s}))$, then
  \begin{eqnarray}
   &&j_{t+1}(\bm{\pi}^*(\textbf{s}))-j_t(\bm{\pi}^*(\textbf{s}))<j_{t+1}(\bm{\pi}^*(\textbf{s}^*))-j_t(\bm{\pi}^*(\textbf{s}^*))\notag \\
   &&\Rightarrow j_{t+1}(\bm{\pi}^*(\textbf{s}))-j_t(\bm{\pi}^*(\textbf{s}))\leq j_{t+1}(\bm{\pi}^*(\textbf{s}^*))-j_t(\bm{\pi}^*(\textbf{s}^*))-1.   \label{equ_j*}
  \end{eqnarray}
  Then
  \begin{eqnarray*}\label{equ_j2}
    k-j_{t+1}(\bm{\pi}^*(\textbf{s}))&&\overset{(a)}{\leq} (s_1-t-1)(j_{t+1}(\bm{\pi}^*(\textbf{s}))-j_{t}(\bm{\pi}^*(\textbf{s}))) \notag\\
                                     &&\overset{(b)}{\leq} (s_1-t-1)(j_{t+1}(\bm{\pi}^*(\textbf{s}^*))-j_{t}(\bm{\pi}^*(\textbf{s}^*))-1)\notag\\
                                     &&\leq (R-t-1)(j_{t+1}(\bm{\pi}^*(\textbf{s}^*))-j_{t}(\bm{\pi}^*(\textbf{s}^*))-1) \notag\\
                                     &&\overset{(c)}{\leq} k-j_{t+1}(\bm{\pi}^*(\textbf{s}^*)),
  \end{eqnarray*}
  where (a) is based on (\ref{equ_ji}), (b) is because of (\ref{equ_j*}) and (c) results from (\ref{equ_j1}). Hence, $$j_{t+1}(\bm{\pi}^*(\textbf{s})^*)\leq j_{t+1}(\bm{\pi}^*(\textbf{s})),$$
  contradicting assumption (\ref{equ_assume}), and it can be proved that
  $j_{t+1}(\bm{\pi}^*(\textbf{s}^*))\leq j_{t+1}(\bm{\pi}^*(\textbf{s})).$ Since $h_{\bm{\pi}^*(\textbf{s})}(i)=t$ for $j_{t-1}(\bm{\pi}^*(\textbf{s}))\leq i \leq j_{t}(\bm{\pi}^*(\textbf{s}))$ $(t=1,2,...,s_1)$ and $j_i(\bm{\pi}^*(\textbf{s}^*))\leq j_i(\bm{\pi}^*(\textbf{s}))$ for $1\leq i \leq s_1-1$, it can be proved that
  \begin{equation}\label{equ_hpis}
    h_{\bm{\pi}^*(\textbf{s}^*)}(i)\geq h_{\bm{\pi}^*(\textbf{s})}(i),
  \end{equation} for $1\leq i\leq k$. Hence, $a_i(\bm{\pi}^*(\textbf{s}^*))\leq a_i(\bm{\pi}^*(\textbf{s}))$ for $1\leq i \leq k$.
\end{proof}

\begin{theorem}\label{theorem_MC2}
For the given node parameters $(n,k,L,R,S)$, when the selected node distribution $\textbf{s}^*$ is generated by the horizontal selection algorithm and the corresponding cluster order is generated by the vertical order algorithm, the min-cut of this IFG isn't greater than any IFGs. In other words,

\begin{equation*}
  MC(\textbf{s}^*, \bm{\pi}^*(\textbf{s}^*))\leq MC(\textbf{s}, \bm{\pi}^*(\textbf{s})),
\end{equation*}
for all $\textbf{s}\in \mathcal{S}$ with $s_0=0$. Note that $\bm{\pi}^*(\cdot)$ is defined by (\ref{equ_pis}). $MC(\textbf{s},\bm{\pi})$ is defined by (\ref{equ_MC}).
\end{theorem}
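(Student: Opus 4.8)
The plan is to reduce the comparison of the two min-cuts to a term-by-term comparison of their individual part incoming weights. Since $s_0=0$, every selected node is a cluster node, so by (\ref{equ_wi}) $w_i(\bm{\pi})=a_i(\bm{\pi})\beta_I+b_i(\bm{\pi})\beta_C$; writing $\phi_i(\bm{\pi})=a_i(\bm{\pi})+b_i(\bm{\pi})$ (the quantity already used in the proof of Theorem~\ref{theorem_MC}) and substituting $b_i=\phi_i-a_i$ gives
\begin{equation*}
w_i(\bm{\pi})=\phi_i(\bm{\pi})\,\beta_C+a_i(\bm{\pi})\,(\beta_I-\beta_C).
\end{equation*}
Because $0\le\beta_C$ and $0\le\beta_I-\beta_C$ (the modelling assumption $\beta_I\ge\beta_C$), it suffices to prove the two pointwise inequalities
\begin{equation*}
a_i(\bm{\pi}^*(\textbf{s}^*))\le a_i(\bm{\pi}^*(\textbf{s})),\qquad \phi_i(\bm{\pi}^*(\textbf{s}^*))\le\phi_i(\bm{\pi}^*(\textbf{s})),
\end{equation*}
for every $1\le i\le k$ and every $\textbf{s}\in\mathcal{S}$ with $s_0=0$: they imply $w_i(\bm{\pi}^*(\textbf{s}^*))\le w_i(\bm{\pi}^*(\textbf{s}))$, hence $\min\{w_i(\bm{\pi}^*(\textbf{s}^*)),\alpha\}\le\min\{w_i(\bm{\pi}^*(\textbf{s})),\alpha\}$, and summing over $i$ via (\ref{equ_MC}) yields exactly $MC(\textbf{s}^*,\bm{\pi}^*(\textbf{s}^*))\le MC(\textbf{s},\bm{\pi}^*(\textbf{s}))$.

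The first pointwise inequality is precisely Lemma~\ref{pro_ai2}, so the only real work is the second. For it I would use the identity
\begin{equation*}
\phi_i(\bm{\pi}^*(\textbf{s}))=\max\{\,d-i+1,\ a_i(\bm{\pi}^*(\textbf{s}))\,\},\qquad 1\le i\le k,
\end{equation*}
which holds for the output of the vertical order algorithm on any $\textbf{s}$ and is essentially already established inside the proof of Theorem~\ref{theorem_MC}: up to the threshold index $p=p(\textbf{s})$ at which $b_i$ first reaches $0$ one has $\phi_i(\bm{\pi}^*(\textbf{s}))=d-i+1$ while simultaneously $a_i(\bm{\pi}^*(\textbf{s}))\le d-i+1$ (since $b_i\ge0$), and for $i>p$ one has $\phi_i(\bm{\pi}^*(\textbf{s}))=a_i(\bm{\pi}^*(\textbf{s}))$ with $a_i(\bm{\pi}^*(\textbf{s}))\ge d-i+1$ (by (\ref{equ_bi}) with $b_i=0$); in both ranges the right-hand side equals $\phi_i$. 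Since $x\mapsto\max\{d-i+1,x\}$ is nondecreasing, applying the identity to both $\textbf{s}^*$ and $\textbf{s}$ promotes the inequality $a_i(\bm{\pi}^*(\textbf{s}^*))\le a_i(\bm{\pi}^*(\textbf{s}))$ of Lemma~\ref{pro_ai2} directly to $\phi_i(\bm{\pi}^*(\textbf{s}^*))\le\phi_i(\bm{\pi}^*(\textbf{s}))$, which closes the gap.

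The step I expect to need the most care is this identity for $\phi_i$, in particular its degenerate instances: when $b_i$ never drops to $0$ within the first $k$ steps the threshold $p(\textbf{s})$ must be read as ``exceeding $k$'', so that $\phi_i(\bm{\pi}^*(\textbf{s}))=d-i+1$ for all $i\le k$, and one should also check that the identity is unaffected if $a_i$ happens to reach $0$ before $b_i$ does; none of these cases alter the conclusion, but each warrants an explicit line. As a safer but longer alternative I could instead mirror Part~1 and Part~2 of the proof of Theorem~\ref{theorem_MC}: the vertical order algorithm makes both sequences $(w_i(\bm{\pi}^*(\textbf{s}^*)))_{i=1}^k$ and $(w_i(\bm{\pi}^*(\textbf{s})))_{i=1}^k$ non-increasing, so the sum of the $t$ smallest weights of each equals the sum of its last $t$ entries; bounding those tail sums through the (already tail-summable) $a_i$- and $\phi_i$-comparisons and then treating the truncation at $\alpha$ by the same case split on how many weights exceed $\alpha$ would give the same conclusion.
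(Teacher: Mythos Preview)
Your proposal is correct and lands on the same pointwise inequality $w_i(\bm{\pi}^*(\textbf{s}^*))\le w_i(\bm{\pi}^*(\textbf{s}))$ that the paper proves, but the route you take to it is tidier than the paper's. The paper works directly with $a_i$ and $b_i$: it introduces the threshold index $i^*(\textbf{s})$ at which $b_i(\bm{\pi}^*(\textbf{s}))$ first vanishes, shows $i^*(\textbf{s}^*)\ge i^*(\textbf{s})$ from Lemma~\ref{pro_ai2}, and then computes $w_i(\bm{\pi}^*(\textbf{s}))-w_i(\bm{\pi}^*(\textbf{s}^*))$ separately on the three ranges $1\le i\le i^*(\textbf{s})$, $i^*(\textbf{s})<i\le i^*(\textbf{s}^*)$, $i^*(\textbf{s}^*)<i\le k$, using the explicit formulas $a_i=d_I+1-h_{\bm{\pi}^*(\textbf{s})}(i)$ and $b_i=\max\{d_C-(i-h_{\bm{\pi}^*(\textbf{s})}(i)),0\}$. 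Your decomposition $w_i=\phi_i\beta_C+a_i(\beta_I-\beta_C)$ together with the identity $\phi_i(\bm{\pi}^*(\textbf{s}))=\max\{d-i+1,\,a_i(\bm{\pi}^*(\textbf{s}))\}$ collapses that three-case analysis into a single monotonicity step, since the $\max$ absorbs exactly the dichotomy ``$b_i>0$ versus $b_i=0$'' that the paper splits on. What you gain is brevity and a clearer reason \emph{why} the termwise bound holds; what the paper's version makes more visible is the explicit dependence on $h_{\bm{\pi}^*(\textbf{s})}(i)$, which it reuses elsewhere. Your caveat about the degenerate cases (no threshold within $k$ steps, or $a_i$ hitting $0$) is well placed; under the standing assumption $d_I=R-1$ one has $a_i=R-h_{\bm{\pi}^*(\textbf{s})}(i)\ge 0$ with equality harmless, so the identity survives unchanged.
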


\begin{proof}
Based on the vertical order algorithm, sequence $(w_1(\bm{\pi}^*(\textbf{s})), ..., w_k(\bm{\pi}^*(\textbf{s})))$ is non-increasing for all $\textbf{s}\in \mathcal{S}$ and $s_0=0$. Similarly to the proof of Theorem~\ref{theorem_MC}, it is only needed to prove that
\begin{equation*}
w_i(\bm{\pi}^*(\textbf{s}^*))\leq w_i(\bm{\pi}^*(\textbf{s}))
\end{equation*}
for $1\leq i \leq k$. From the definition of $a_i(\bm{\pi}^*(\textbf{s}))$, $b_i(\bm{\pi}^*(\textbf{s}))$ and $h_{\bm{\pi}}(i)$(see (\ref{equ_hi})), it is known that
\begin{equation}\label{equ_aipi}
a_i(\bm{\pi}^*(\textbf{s}))=d_I+1-h_{\bm{\pi}^*(\textbf{s})}(i)
\end{equation}
for $1\leq i\leq k$.
When $d_C-(i-h_{\bm{\pi}^*(\textbf{s})}(i))\geq 0$,
\begin{equation}\label{equ_bipi}
b_i(\bm{\pi}^*(\textbf{s}))=d_C-(i-h_{\bm{\pi}^*(\textbf{s})}(i)).
\end{equation}
We assume $b_i(\bm{\pi}^*(\textbf{s}))$ decreases to $0$ when $i=i^*(\textbf{s})$, where $i^*(\textbf{s})$ is a function of $\textbf{s}$. It will not decrease anymore and
$w_i(\bm{\pi}^*(\textbf{s}))= a_i(\bm{\pi}^*(\textbf{s}))\beta_I$ for $i\geq i^*(\textbf{s})$, where
\begin{equation}\label{equ_is}
i^*(\textbf{s})-h_{\bm{\pi}^*(\textbf{s})}(i^*(\textbf{s}))=d_C.
\end{equation}
As is proved in Lemma~\ref{pro_ai2} (\ref{equ_hpis}), $h_{\bm{\pi}^*(\textbf{s})}(i) \leq h_{\bm{\pi}^*(\textbf{s}^*)}(i)$ for $1\leq i\leq k$,
then $i^*(\textbf{s}^*)\geq i^*(\textbf{s})$ based on (\ref{equ_is}).

\noindent $\bullet$ When $1\leq i\leq i^*(\textbf{s})$,
\begin{small}
\begin{eqnarray*}
  w_i(\bm{\pi}^*(\textbf{s}))-w_i(\bm{\pi}^*(\textbf{s}^*))&=&\big(a_i(\bm{\pi}^*(\textbf{s}))-a_i(\bm{\pi}^*(\textbf{s}^*))\big)\beta_I+ \big(b_i(\bm{\pi}^*(\textbf{s}))-b_i(\bm{\pi}^*(\textbf{s}^*))\big)\beta_C \notag\\
  &\overset{(a)}{=}&\big(h_{\bm{\pi}^*(\textbf{s}^*)}(i)-h_{\bm{\pi}^*(\textbf{s})}(i)\big)\beta_I-\big(h_{\bm{\pi}^*(\textbf{s}^*)}(i)-h_{\bm{\pi}^*(\textbf{s})}(i)\big)\beta_C \\
  &=&\big(h_{\bm{\pi}^*(\textbf{s}^*)}(i)-h_{\bm{\pi}^*(\textbf{s})}(i)\big)(\beta_I-\beta_C)
  \overset{(b)}{\geq} 0.
\end{eqnarray*}
\end{small}
Note that $(a)$ is based on (\ref{equ_aipi}) and (\ref{equ_bipi}). $(b)$ comes from (\ref{equ_hpis}) and $\beta_I\geq\beta_C$. Then $w_i(\bm{\pi}^*(\textbf{s}))\geq w_i(\bm{\pi}^*(\textbf{s}^*)$.

\noindent $\bullet$ When $i^*(\textbf{s})+1\leq i\leq i^*(\textbf{s}^*)$, $b_i(\bm{\pi}^*(\textbf{s}))$ equals to $0$ and will not decrease with $i$ increasing, but
\begin{equation}\label{equ_bil}
  d_C-(i-h_{\bm{\pi}^*(\textbf{s})}(i))\leq 0.
\end{equation}
Hence,
\begin{small}
\begin{eqnarray*}
w_i(\bm{\pi}^*(\textbf{s}))-w_i(\bm{\pi}^*(\textbf{s}^*))
&=&\big(a_i(\bm{\pi}^*(\textbf{s}))-a_i(\bm{\pi}^*(\textbf{s}^*))\big)\beta_I-b_i(\bm{\pi}^*(\textbf{s}^*)\beta_C \notag\\
&\overset{(c)}{\geq}&\big(h_{\bm{\pi}^*(\textbf{s}^*)}(i)-h_{\bm{\pi}^*(\textbf{s})}(i)\big)\beta_I-b_i(\bm{\pi}^*(\textbf{s}^*)\beta_C-(d_C-(i-h_{\bm{\pi}^*(\textbf{s})}(i)))\beta_C\\
&=&\big(h_{\bm{\pi}^*(\textbf{s}^*)}(i)-h_{\bm{\pi}^*(\textbf{s})}(i)\big)\beta_I-\big(h_{\bm{\pi}^*(\textbf{s}^*)}(i)-h_{\bm{\pi}^*(\textbf{s})}(i)\big)\beta_C\\
&=&\big(h_{\bm{\pi}^*(\textbf{s}^*)}(i)-h_{\bm{\pi}^*(\textbf{s})}(i)\big)(\beta_I-\beta_C)\geq 0,
\end{eqnarray*}
\end{small}
where $(c)$ bases on (\ref{equ_bil}).

\noindent $\bullet$ When $i^*(\textbf{s}^*)+1\leq i\leq k$,
$w_i(\bm{\pi}^*(\textbf{s}))= a_i(\bm{\pi}^*(\textbf{s}))\beta_I\geq a_i(\bm{\pi}^*(\textbf{s}^*))\beta_I = w_i(\bm{\pi}^*(\textbf{s}^*))$.
\end{proof}
The consequence of Theorem~\ref{theorem_MC2} can be verified by Algorithm 2 and the numerical examples illustrated in Figure~\ref{fig_twoSonePi} (a) and (b). As is proved by Theorem~\ref{theorem_MC} and Theorem~\ref{theorem_MC2}, the capacity of a cluster DSS with given node parameters and storage/repair parameters is $MC(\textbf{s}^*, \bm{\pi}^*(\textbf{s}^*))$. Based on (\ref{equ_bound}), the tradeoff between node storage and repair bandwidth can be characterized, which is illustrated in Section \ref{sec_consruction} for specific numerical parameters.

\subsection{The Min-cuts of IFGs with separate selected nodes} \label{subsec_WithSN}

As is proved in last subsection, the horizontal selection algorithm and the vertical order algorithm will generate the selected node distribution $\textbf{s}^*$ and the corresponding cluster order $\bm{\pi}^*(\textbf{s}^*)$, achieving the minimum min-cut of the IFGs without separate selected nodes. In this subsection, we analyse the min-cut of IFGs with one separate selected nodes in Theorem~\ref{theorem_MC3}, namely $s_0=1$ in $\textbf{s}$.

\begin{figure*}[!t]
  \centering
  \subfloat[$\bm{\pi}^*=(1,2,0,1,2,1,2,1)$]
  {\includegraphics[width=0.25\textwidth]{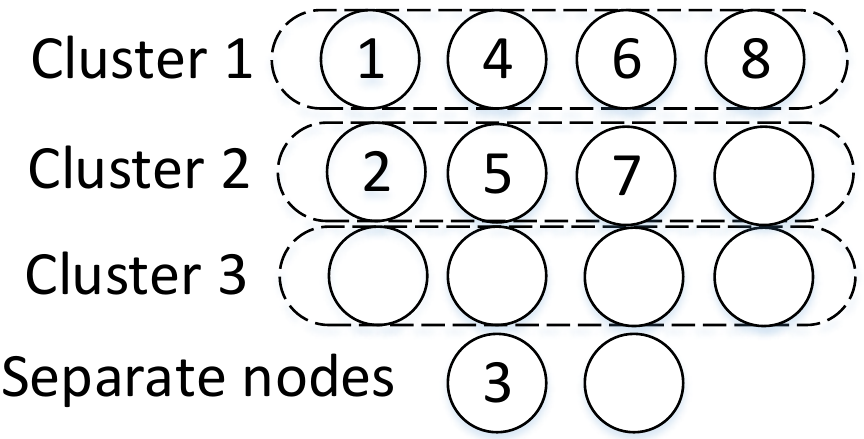}}
  \hspace{0.02\textwidth}
  \subfloat[$\bm{\pi}=(1,2,0,1,2,1,2,3)$]
  {\includegraphics[width=0.25\textwidth]{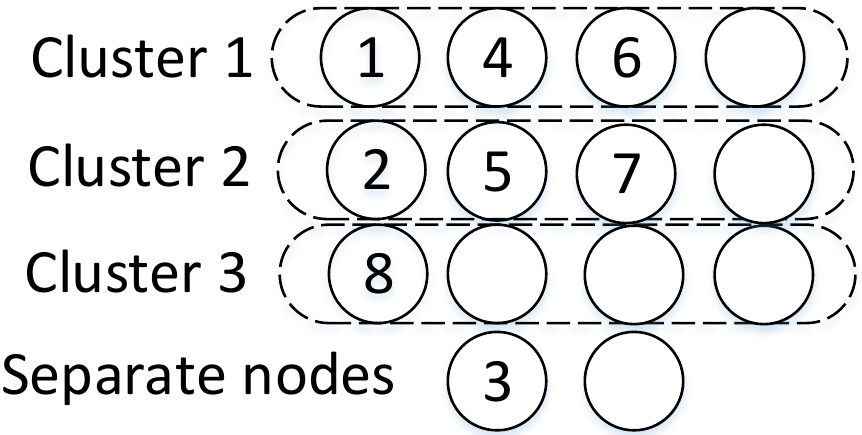}}
  \hspace{0.02\textwidth}
  \subfloat[$\overline{\bm{\pi}}=(1,2,1,2,1,2,3,1)$]
  {\includegraphics[width=0.25\textwidth]{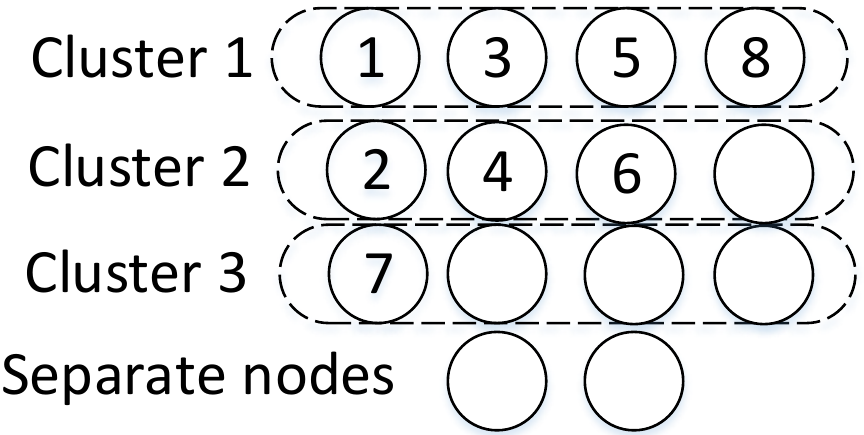}}
  \caption{The numbered nodes are selected nodes. There are three selected node distributions $\textbf{s}^*=(1,4,3,0)$, $\textbf{s}=(1,3,3,1)$ and $\overline{\textbf{s}}=(0,4,3,1)$, with cluster orders $\bm{\pi}^*$, $\bm{\pi}$ and $\overline{\bm{\pi}}$ respectively for CSN-DSS with node parameters $(n=12,k=9,L=3,R=4,S=2)$.}\label{fig_SC}
\end{figure*}

Note that Theorem~\ref{theorem_MC} focuses on the influence of cluster order $\bm{\pi}$ where the selected node distribution $\textbf{s}$ is fixed. On the other hand, Theorem~\ref{theorem_MC2} analyses different selected node distributions while the cluster order generating algorithm is fixed. The following theorem combines these two aspects and investigates the situation where there is one separate selected node.

\begin{theorem}\label{theorem_MC3}
  For the given node parameters $(n,k,L,R,S)$, assume the $j$-th selected node is a separate node, then the selected node distribution $\textbf{s}^*$ generated by the horizontal selection algorithm with cluster order $\bm{\pi}^*(\textbf{s}^*)$ generated by the vertical order algorithm minimizes the min-cut of all the IFGs. In other words,
  \begin{equation*}
    MC(\textbf{s}^*, \bm{\pi}^*(\textbf{s}^*))\leq MC(\textbf{s}, \bm{\pi}),
  \end{equation*}
  for all $\textbf{s}\in \mathcal{S}$ with $s_0=1$ and $\bm{\pi}\in \Pi(\textbf{s})$ with $\pi_j=0$. Note that $\bm{\pi}^*(\cdot)$ is defined by (\ref{equ_pis}). $MC(\textbf{s},\bm{\pi})$ is defined by (\ref{equ_MC}).
\end{theorem}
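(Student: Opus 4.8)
The plan is to reduce the mixed situation $s_0 = 1$ to the already-established results of Theorem~\ref{theorem_MC} and Theorem~\ref{theorem_MC2} by a two-step comparison, inserting an intermediate selected node distribution between $\textbf{s}$ and $\textbf{s}^*$. First I would fix an arbitrary $\textbf{s}\in\mathcal{S}$ with $s_0 = 1$ and an arbitrary $\bm{\pi}\in\Pi(\textbf{s})$ with $\pi_j = 0$, and apply Theorem~\ref{theorem_MC} directly: since that theorem's proof (Part~1 and Part~2) never uses $s_0 = 0$ — it only needs that the cluster-node part incoming weights $w_i = a_i\beta_I + b_i\beta_C$ be reorganizable by the vertical order algorithm, and the separate node contributes a fixed term $c_j\beta_S$ with $c_j = d-(j-1)$ that is identical under any reordering that keeps $\pi_j = 0$ — we get $MC(\textbf{s}, \bm{\pi}^*(\textbf{s})) \le MC(\textbf{s}, \bm{\pi})$. (If the vertical order algorithm as written would move the separate node, I would note that the part-cut value of the separate node depends only on its position $j$ in the repair sequence and on nothing else, so WLOG the separate node stays in slot $j$; alternatively one compares $MC(\textbf{s},\bm{\pi})$ to the vertical order among the cluster nodes only.) This disposes of the cluster-order direction.

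Second, with all cluster orders now generated by the vertical order algorithm, I would compare $MC(\textbf{s}, \bm{\pi}^*(\textbf{s}))$ across different $\textbf{s}$ with $s_0 = 1$, mimicking the proof of Theorem~\ref{theorem_MC2}: establish $w_i(\bm{\pi}^*(\textbf{s}^*_{s_0=1})) \le w_i(\bm{\pi}^*(\textbf{s}))$ for every $i$, where $\textbf{s}^*_{s_0=1}$ is the horizontal distribution with one separate node forced. The key sub-lemma is the analogue of Lemma~\ref{pro_ai2}: among distributions with one selected separate node, the horizontal one maximizes the relative locations $h_{\bm{\pi}^*(\textbf{s})}(i)$ of the cluster nodes, hence minimizes the $a_i$, hence minimizes each $w_i$; the separate-node weight $c_i\beta_S = (d-i+1)\beta_S$ is the same for every such $\textbf{s}$ since it depends only on the position of the separate node in the sequence, which we may fix. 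The jumping-point induction argument of Lemma~\ref{pro_ai2} carries over essentially verbatim once one observes that removing the single separate slot from the sequence leaves a cluster-only repair sequence on $k-1$ nodes to which the earlier monotonicity applies.

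Third, I would close the gap between the two "best" distributions: $MC(\textbf{s}^*, \bm{\pi}^*(\textbf{s}^*)) \le MC(\textbf{s}^*_{s_0=1}, \bm{\pi}^*(\textbf{s}^*_{s_0=1}))$, i.e. that forcing a separate node into the horizontal distribution cannot decrease the min-cut. Intuitively this holds because a separate node, being a cross-cluster helper for everyone, contributes a part-cut value of $d-i+1$ copies of $\beta_S \ge$ (something) whereas filling that slot with a cluster node in a full cluster contributes only $a_i\beta_I + b_i\beta_C$ with $a_i$ possibly already $0$; more carefully, $\beta_I \ge \beta_C$ and $\beta_I \ge \beta_S$ is \emph{not} assumed, so I expect the honest comparison to route through Theorem~\ref{theorem_MC2} itself applied with the separate node relabelled as a cross-cluster-only cluster node, or through a direct term-by-term check of (\ref{equ_MC}). \textbf{This last step is the main obstacle}: reconciling the separate-node weight $c_i\beta_S$ with the cluster-node weight $a_i\beta_I + b_i\beta_C$ requires handling the possibility that $\beta_S$ is large relative to $\beta_I,\beta_C$, in which case one must argue that the \emph{position} $j$ of the separate node can be chosen (or is, by the theorem's own hypothesis, already chosen) so that the comparison still goes through after the two reductions above are chained together.
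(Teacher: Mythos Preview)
Your first two steps are essentially the paper's own argument. The paper's sketch makes the reduction explicit via the ``cluster order assignment'': given $\bm{\pi}$ with $\pi_j=0$, it deletes the separate slot and shifts to obtain a cluster-only order $\overline{\bm{\pi}}$ on $k-1$ nodes, then observes that the relationship between $\bm{\pi}$ and $\overline{\bm{\pi}}$ lets one invoke Theorems~\ref{theorem_MC} and~\ref{theorem_MC2} on the cluster part while the separate-node contribution $c_j\beta_S=(d-j+1)\beta_S$ stays fixed. That is exactly what you are doing in Steps~1 and~2 (your remark that ``removing the single separate slot from the sequence leaves a cluster-only repair sequence on $k-1$ nodes'' is the $\overline{\bm{\pi}}$ device).

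Your Step~3, however, rests on a misreading of the statement, and the ``main obstacle'' you identify is not there. In Theorem~\ref{theorem_MC3} the horizontal distribution $\textbf{s}^*$ is taken with $s_0^*=1$, not $s_0^*=0$: the horizontal selection algorithm (Algorithm~2) treats $s_0^*$ as a parameter, and in the paper's worked example for this theorem (Figure~\ref{fig_SC}(a)) one has $\textbf{s}^*=(1,4,3,0)$. The theorem only compares configurations \emph{within} the class $s_0=1$ (with the separate node pinned at position $j$); it does not claim that $s_0=0$ beats $s_0=1$. Consequently your chain $MC(\textbf{s}^*,\bm{\pi}^*(\textbf{s}^*))\le MC(\textbf{s},\bm{\pi}^*(\textbf{s}))\le MC(\textbf{s},\bm{\pi})$ from Steps~1--2 already finishes the proof, and the delicate comparison of $c_j\beta_S$ against $a_j\beta_I+b_j\beta_C$ that worried you never arises. (The paper acknowledges separately, after the theorem, that comparing across different values of $s_0$ and different positions $j$ would require additional assumptions on $\beta_S$ relative to $\beta_I,\beta_C$; that analysis is explicitly left out.)
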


Due to space limitation, here we just sketch the proof of Theorem~\ref{theorem_MC3} in the following part.

\noindent \textbf{Cluster order assignment:} Let $\overline{\bm{\pi}}=(\overline{\pi}_1,\overline{\pi}_1,...,\overline{\pi}_k)$ denote a cluster order with no separate selected nodes in its corresponding selected node distribution $\overline{\textbf{s}}$. When the $j$-th selected node is a separate node, let $\bm{\pi}$ denote the new cluster order with
\begin{small}
\begin{equation*}\label{equ_newpi}
  \pi_i=\begin{cases}
          \overline{\pi}_i, & \mbox{if } 1\leq i< j \\
          0, & \mbox{if } i=j \\
          \overline{\pi}_{i-1}, & \mbox{if } j<i\leq k
        \end{cases}.
\end{equation*}\end{small}
As is illustrated in Figure~\ref{fig_SC} (b) (c), node 3 is a separate selected node in $\bm{\pi}$, then $\pi_1=\overline{\pi}_1$, $\pi_2=\overline{\pi}_2$, $\pi_3=0$ and $\pi_i=\overline{\pi}_{i-1}(i=4,5,6,7,8)$. In Figure~\ref{fig_SC} (a), the optimal selected node distribution is $\textbf{s}^*=(1,4,3,0)$ and the optimal cluster order is $\bm{\pi^*}=(1,2,0,1,2,1,2,1)$ generated by the vertical order algorithm. Note that the 3rd node in the cluster order is a separate node, which is fixed beforehand.

\noindent \textbf{Main idea of the proof:} For any cluster order $\bm{\pi}$ whose $j$-th node is a separate node, there always exists a cluster order $\overline{\bm{\pi}}$ with no separate selected nodes satisfying the definition of $\bm{\pi}$. Through investigating the relationship between $\overline{\bm{\pi}}$ and $\bm{\pi}$, the proof of Theorem~\ref{theorem_MC3} can be reduced to similar problems of Theorem~\ref{theorem_MC} and Theorem~\ref{theorem_MC2}, which is omitted here.

To analyse the minimum min-cut of IFGs with one separate selected node, the location of the separate selected node (the value of $j$ in Theorem~\ref{theorem_MC3}) is also important. Moreover, the relation among $\beta_I$, $\beta_C$ and $\beta_S$ need to be taken into consideration. The situation with multiple separate selected nodes can be investigated using similar methods, which is not introduced here due to the space limitation.

\section{Numerical Results and Code Constructions for Cluster DSSs}\label{sec_consruction}

In this section, Figure~\ref{fig_bounds} illuminates some numerical capacity bounds for cluster DSSs without separate nodes. As is mentioned in \cite{Survey2011,Fazeli2016}, interference alignment is an important method in regenerating code constructions, which is also applicative in cluster DSSs. A code construction strategy with interference alignment is investigated for a cluster DSS with specific parameters as an example(see Figure~\ref{fig_46bound} and Figure~\ref{fig_consFig}). The code constructions for general cases can utilize similar methods.

\begin{figure*}
 \begin{minipage}[t]{0.35\textwidth}
  \centering
  \includegraphics[width=\textwidth]{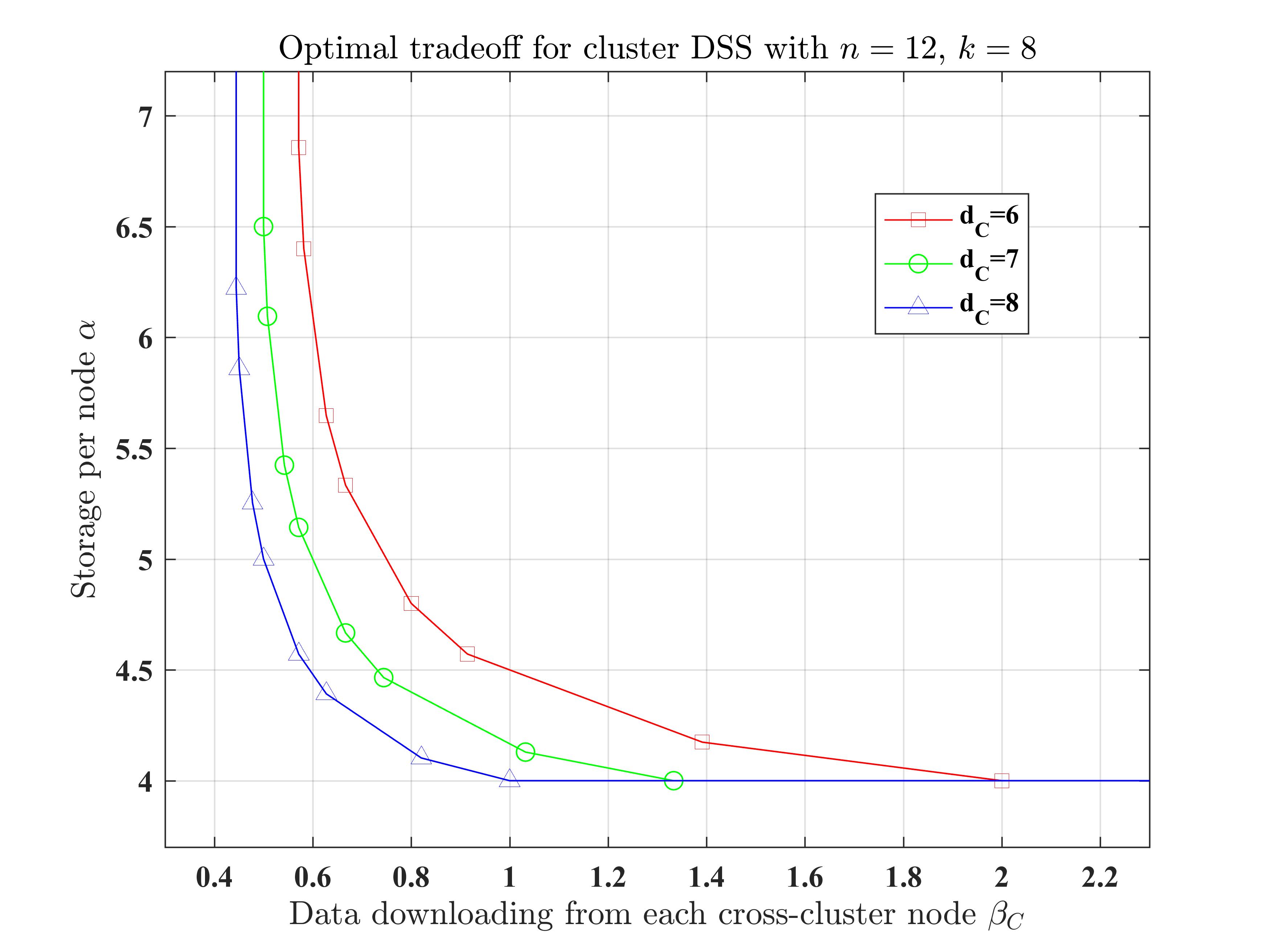}
  \caption{Optimal tradeoff curves between storage per node $\alpha$ and data downloading from each cross-cluster node $\beta_C$, for cluster DSS with $n=12$, $k=8$ and $\beta_I=2\beta_C$. There are three curves for different number of cross-cluster helper nodes, namely, $d_C=6,7,8$, respectively. The original file size is $\mathcal{M}=32$.}\label{fig_bounds}
\end{minipage}
\hspace{0.01\textwidth}
\begin{minipage}[t]{0.35\textwidth}
  \centering
  \includegraphics[width=\textwidth]{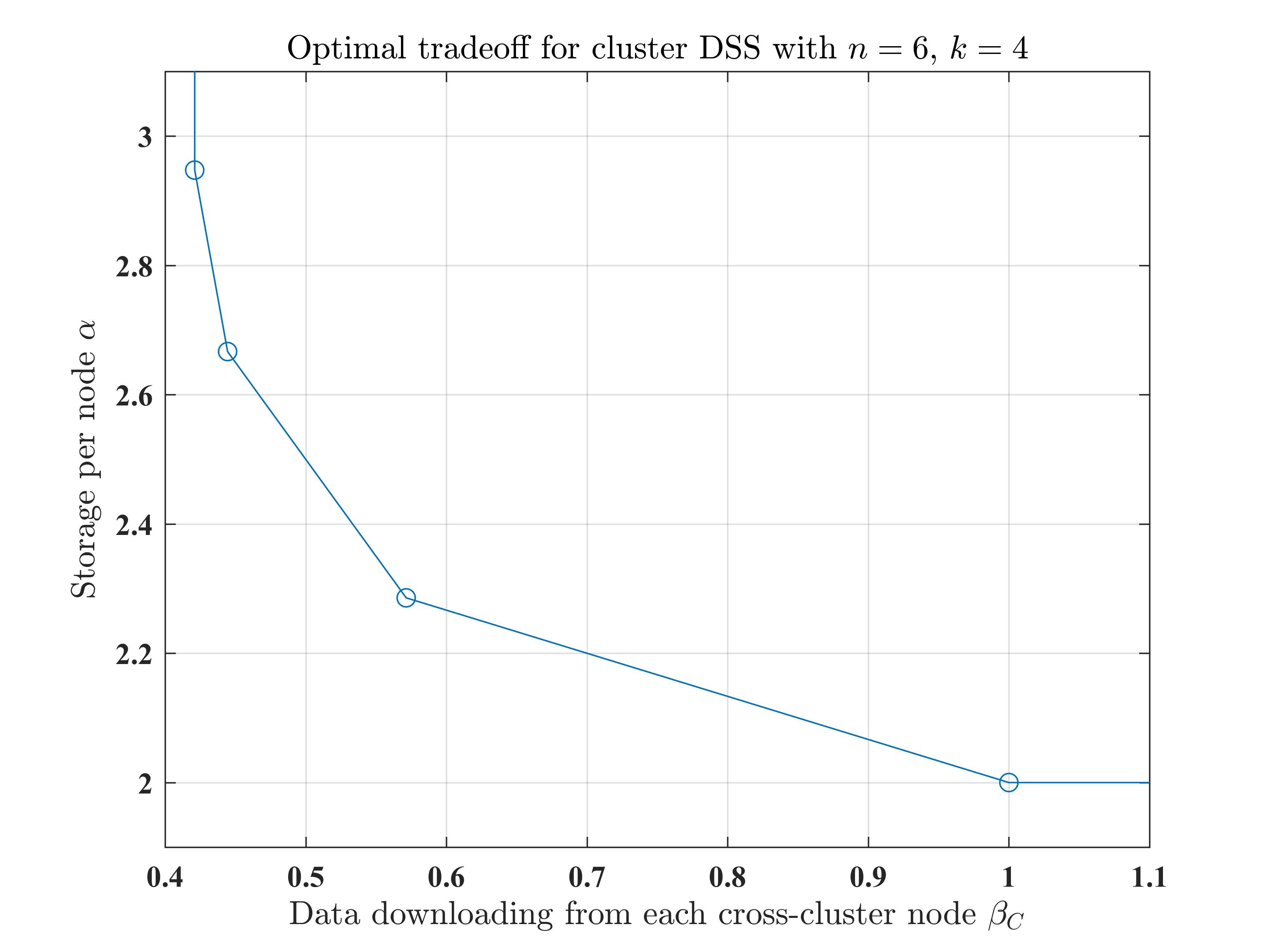}
  \caption{Optimal tradeoff curves between storage per node $\alpha$ and data downloading from each cross-cluster node $\beta_C$, for cluster DSS with $n=6$, $k=4$ and $\beta_I=2\beta_C, d_C=3$. The original file size is $\mathcal{M}=8$.}\label{fig_46bound}
\end{minipage}
\hspace{0.01\textwidth}
\begin{minipage}[t]{0.25\textwidth}
  \centering
  \includegraphics[width=\textwidth]{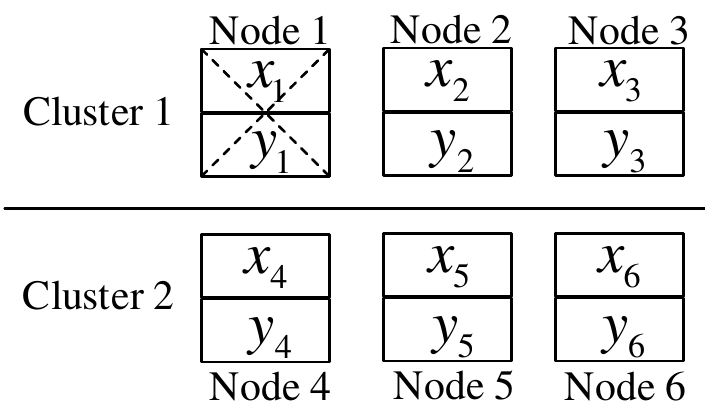}
  \caption{MSR construction for cluster DSS with parameters $n=6,k=4,\beta_I=2\beta_C, d_C=3$ and $\mathcal{M}=8$.}
  \label{fig_consFig}
 \end{minipage}
\end{figure*}

\noindent \textbf{Model configurations:} Assume the node parameters are $(n=12, k=8, L=3, R=4, S=0)$ as Figure~\ref{fig_twoSonePi} shows. Note that different relations between $\beta_I$ and $\beta_C$ result in different tradeoffs between storage per node $\alpha$ and bandwidth to repair one node. We consider a specific situation that $\beta_I=2\beta_C$. As is proved before, the cluster order illustrated in Figure~\ref{fig_twoSonePi} (a) achieves the capacity of this DSS. Based on the bound constraint in (\ref{equ_bound}), for specific values of $k-R+1\leq d_C\leq n-k$, the tradeoff between $\alpha$ and $\beta_C$ is illuminated in Figure~\ref{fig_bounds}, where the file size $\mathcal{M}$ is set to be $32$ for simplicity.

As is illuminated in Figure~\ref{fig_bounds}, the tradeoff curve moves left as the cross-cluster helper nodes increase. Since $\alpha$ is the storage per node and $\beta_C$ corresponds to the repair bandwidth, when the storage per node is fixed, the more helper nodes are utilized, the less bandwidth will be, which is consistent with the consequences of DSS without clusters in \cite{Dimakis2010}. When $d_C=8$, the point $(\alpha=4, \beta_C=2)$ achieves the minimum storage and the corresponding code constructions is called minimum-storage regenerating (MSR) codes, as is defined in \cite{Dimakis2010}. We will investigate an MSR construction for cluster DSS with less nodes as an example.

Consider the cluster DSS with $n=6, k=4$ in Figure~\ref{fig_consFig}. It can be verified in Figure~\ref{fig_46bound} that the MSR point of this system is $(\alpha=2, \beta_C=1)$, hence the amount of data downloading from each intra-cluster node is $\beta_I=2\beta_C=2$.

\noindent \textbf{Encoding procedure:} As the original file size is $\mathcal{M}=8$, we assume $x_i(1\leq i\leq 4)$ and $y_i(1\leq i\leq 4)$ are the original file symbols storing from Node~1 to Node~4 as Figure~\ref{fig_consFig} illustrates. Two (6,4)-MDS codes are used to encode symbols $(x_1,x_2,x_3,x_4)$ and $(y_1,y_2,y_3,y_4)$, respectively. Let
\begin{small}\begin{equation}
(x_1,x_2,x_3,x_4,x_5,x_6)=(x_1,x_2,x_3,x_4)\left[\mathbf{I}_{4\times 4}|\mathbf{g}\ \mathbf{h}\right] \text{ and }(y_1,y_2,y_3,y_4,y_5,y_6)=(y_1,y_2,y_3,y_4)\left[\mathbf{I}_{4\times 4}|\mathbf{g}'\ \mathbf{h}'\right],\end{equation}\end{small}
where $\mathbf{I}_{4\times 4}$ is an identity matrix and $\mathbf{g}=(g_1,g_2,g_3,g_4)^t$, $\mathbf{h}=(h_1,h_2,h_3,h_4)^t$, $\mathbf{g'}=(g'_1,g'_2,g'_3,g'_4)^t$, $\mathbf{h}=(h'_1,h'_2,h'_3,h'_4)^t$.
Then
\begin{small}
\begin{eqnarray*}
x_5&=&g_1x_1+g_2x_2+g_3x_3+g_4x_4,\ y_5=g'_1y_1+g'_2y_2+g'_3y_3+g'_4y_4,\\
x_6&=&h_1x_1+h_2x_2+h_3x_3+h_4x_4,\ y_6=h'_1y_1+h'_2y_2+h'_3y_3+h'_4y_4.
\end{eqnarray*}
\end{small}

\noindent \textbf{Repair procedure:} Assume Node 1 has failed, based on the tradeoff in Figure~\ref{fig_46bound} ($\alpha=2$ and $\beta_C=1,\ \beta_I=2\beta_C=2$), we will download $2$ symbols each from Node 2 and Node 3 respectively and download $1$ symbols each from Node~4 to Node~6. As the values of $x_2,y_2,x_3,y_3$ is known by downloading from Node 2 and Node 3, to calculate $x_1$ and $y_1$, interference alignment can be used to eliminate $x_4$ and $y_4$. For example, the symbols downloading from Node 4, Node 5 and Node 6 are
\begin{small}
\begin{eqnarray*}
symbol_4&=&l_4x_4+l'_4y_4,\\
symbol_5&=&m_5x_5+m'_5y_5=m_5(g_1x_1+g_2x_2+g_3x_3+g_4x_4)+m'_5(g'_1y_1+g'_2y_2+g'_3y_3+g'_4y_4),\\
symbol_6&=&n_6x_6+n'_6y_6=n_6(h_1x_1+h_2x_2+h_3x_3+h_4x_4)+n'_6(h'_1y_1+h'_2y_2+h'_3y_3+h'_4y_4),
\end{eqnarray*}
\end{small}
respectively.
Hence,
\begin{small}
\begin{eqnarray}
symbol_4&=&l_4x_4+l'_4y_4, \label{equ_node4}\\
symbol_5-m_5(g_2x_2+g_3x_3)-m'_5(g'_2y_2+g'_3y_3)&=&m_5g_1x_1+m'_5g'_1y_1+m_5g_4x_4+m'_5g'_4y_4,\label{equ_node5}\\
symbol_6-n_6(h_2x_2+h_3x_3)-n'_6(h'_2y_2+h'_3y_3)&=&n_6h_1x_1+n'_6h'_1y_1+n_6h_4x_4+n'_6h'_4y_4.\label{equ_node6}
\end{eqnarray}\end{small}
Note that the left parts of Eq.~\ref{equ_node4}, Eq.~\ref{equ_node5} and Eq.~\ref{equ_node6} are real values. If the coefficients of $x_1$, $y_1$ and $x_4$, $y_4$ satisfy that
\begin{small}
\begin{equation}\label{equ_rank}
\textbf{rank}\left(\left[\begin{matrix}
             m_5g_1 & m'_5g'_1 \\
             n_6h_1 & n'_6h'_1
           \end{matrix}\right]\right)=2 \text{ and }
\textbf{rank}\left(\left[\begin{matrix}
                l_4 & l'_4    \\
             m_5g_4 & m'_5g'_4 \\
             n_6h_4 & n'_6h'_4
           \end{matrix}\right]\right)=1,
\end{equation}
\end{small}
$x_4$ and $y_4$ can be eliminated, meanwhile, $x_1$ and $y_1$ can be solved, finishing the repair of Node 1.

As is proved in \cite{Survey2011}, there exists MDS codes satisfying the condition (\ref{equ_rank}). When other nodes have failed, similar methods can be utilized to generate the parameters of corresponding MDS codes. The construction of MDS codes adaptive to all the node failures and cluster DSS with general parameters is more complicated and need more future work.

\section{Conclusion and future work} \label{sec_conclusion}

In this paper, DSSs with clusters and separate nodes are investigated, where the tradeoff between node storage and repair bandwidth is characterized on more flexible parameter settings. When a node in cluster DSSs has failed, the number of helper nodes varies based on the practical storage system demands. The influence of separate nodes is also analysed for DSSs with clusters and separated nodes. Moreover, a regenerating code construction strategy is proposed for cluster DSSs with specific parameters as a numerical example, achieving the points in the optimal tradeoff curve.

More general and practical regenerating codes for cluster DSSs need to be investigated further. On the other hand,
more works are needed to characterize the influence of separate selected nodes for the min-cuts of CSN-DSSs more explicitly, when analysing the optimal tradeoff between node storage and repair bandwidth. Furthermore, the constructions of more flexible regenerating codes for CSN-DSSs are also meaningful for practical storage systems.



\bibliographystyle{IEEEtran}
\bibliography{DSbib}
%
%
%

\end{document}